\documentclass[journal,draftcls,onecolumn,12pt,twoside]{IEEEtran}

\usepackage{amscd, amssymb, amsthm, amsmath, color}
\usepackage{graphicx, psfrag, enumerate, multirow, algorithmic}
\usepackage[linesnumbered,ruled,vlined]{algorithm2e}

\newtheorem{theorem}{\bf Theorem}
\newtheorem{proposition}[theorem]{\bf Proposition}
\newtheorem{corollary}[theorem]{\bf Corollary}
\newtheorem{lemma}[theorem]{\bf Lemma}
\newtheorem{example}{Example}

\newcommand{\GP}{{\mathtt{GP}}}

\renewcommand{\d}{\displaystyle}

\long\def\symbolfootnote[#1]#2{\begingroup
\def\thefootnote{\fnsymbol{footnote}}\footnote[#1]{#2}\endgroup}

\begin{document}
\title{The Global Packing Number for an Optical Network}

\author{Yuan-Hsun Lo, Yijin Zhang,~\IEEEmembership{Member,~IEEE}, Wing Shing Wong,~\IEEEmembership{Fellow,~IEEE} and Hung-Lin Fu,~\IEEEmembership{Member,~IEEE}
\thanks{Partially supported by the National Natural Science Foundation of China (No. 61301107, 61174060), the Hong Kong RGC Earmarked Grant CUHK414012 and the Shenzhen Knowledge Innovation Program JCYJ20130401172046453 (Y.-H.~Lo, Y.~Zhang and W.~S.~Wong), and by Ministry of Science and Technology, Taiwan under grants MOST 104-2115-M-009-009 (H.-L.~Fu).}
\thanks{Y.-H. Lo is with the School of Mathematical Sciences, Xiamen University, Xiamen 361005, China. E-mail: yhlo0830@gmail.com}
\thanks{Y. Zhang is with the School of Electronic and Optical Engineering, Nanjing University of Science and Technology, Nanjing, China.
E-mail: yijin.zhang@gmail.com}
\thanks{W. S. Wong is with the Department of Information Engineering, The Chinese University of Hong Kong, Shatin, N.~T., Hong Kong.
E-mail: wswong@ie.cuhk.edu.hk}
\thanks{H.-L. Fu is with the Department of Applied Mathematics, National Chiao Tung University, Hsinchu 300, Taiwan, ROC. E-mail: hlfu@math.nctu.edu.tw}
}
\maketitle

\begin{abstract}
The {\em global packing number problem} arises from the investigation of optimal wavelength allocation in an optical network that employs Wavelength Division Multiplexing (WDM).
Consider an optical network that is represented by a connected, simple graph $G$. 
We assume all communication channels are bidirectional, so that all links and paths
are undirected.
It follows that there are ${|G|\choose 2}$ distinct node pairs associated with $G$,
where $|G|$ is the number of nodes in $G$.
A path system $\mathcal{P}$ of $G$ consists of ${|G|\choose 2}$ paths, one path to
connect each of the node pairs.  The global packing number of a path system $\mathcal{P}$, denoted by $\Phi(G,\mathcal{P})$, is the minimum integer $k$ to guarantee the existence of a mapping $\omega:\mathcal{P}\to\{1,2,\ldots,k\}$, such that $\omega(P)\neq\omega(P')$ if $P$ and $P'$ have common edge(s).
The global packing number of $G$, denoted by $\Phi(G)$, is defined to be the minimum $\Phi(G,\mathcal{P})$ among all possible path systems $\mathcal{P}$.
If there is no wavelength conversion along any optical transmission path for any node pair in the network, the global packing number signifies the minimum number of wavelengths required to support simultaneous communication for all pairs in the network.

In this paper, the focus is on ring networks, so that $G$ is a cycle.
Explicit formulas for the global packing number of a cycle is derived.
The investigation is further extended to chain networks.
A path system, $\mathcal{P}$, that enjoys $\Phi(G,\mathcal{P})=\Phi(G)$ is called {\em ideal}.  A characterization of ideal path systems is also presented.
We also describe an efficient heuristic algorithm to assign wavelengths that can be applied
to a general network with more complicated traffic load.

\end{abstract}

\begin{IEEEkeywords}
Global packing number, Ring networks, WDM networks, Traffic capacity, Wavelength assignment.
\end{IEEEkeywords}

\section{Introduction} \label{sec:intro}


This paper investigates a class of resource allocation problem that deals with the computation of the {\em global packing number} of a communication network.  This
is an index that characterizes the number of wavelengths required to support a uniformly loaded optical network that employs Wavelength Division Multiplexing (WDM).  The global packing number problem is therefore well motivated by engineering application.

Let a graph, $G$, represent a communication network so that the link between any two nodes represents a bidirectional communication channel.  
For simplicity, we assume a uniform traffic model for which the
traffic load between any node pair is identical.   
A classical problem of path routing deals with the question of finding the shortest path between any node pair.   
On the other hand, some networking problems deal with the issue of distributing the routed traffic evenly over the network.   
If the links have identical traffic capacity, $\mathcal{C}$, one formulation of this latter class of questions is to find routing paths that they can all be supported simultaneously by the minimum $\mathcal{C}$. 
 
In an optic network, links between nodes are implemented by optical fibers that carry optic carrier signals.   
The WDM technology \cite{GY_04,LXC_05,S_10,SSA_02} allows carrier signals of different wavelengths be mixed together for simultaneous transmission over the optic fibers.  
In the basic implementation architecture, the wavelength of a carrier signal is kept constant as it passes over the network nodes.  
Hence, one can envision the end-to-end transmission from a source to a destination node as implemented over a reserved {\em lightpath}.  
Different simultaneous transmissions that go through a given optical fiber link must use different wavelengths.  
Hence, there is a basic question of computing the number of wavelengths required in a WDM network in order to support a given traffic model.  
This defines the global packing number problem for $G$ if we assume the traffic load
is simply one request for each distinct node pair. 
 
To facilitate subsequent discussion, we state the following notation and basic results.

\subsection{Notation} \label{sec:intro_notation}
Let $G$ be a connected simple graph. 
For any two nodes, $a,b\in V(G)$, one can assign a particular path to connect $a$ and $b$
and denote it by $P_{\{a,b\}}$.
Note that $P_{\{a,b\}}$ is sometimes set to be one of the shortest paths that connects $a$ with $b$.
Let $$\mathcal{P}:=\{P_{\{a,b\}}:\,a,b\in V(G)\}$$ denote a set of assigned paths,
one for each distinct node pair.
We say $\mathcal{P}$ is a \emph{path system} of $G$.
Obviously, $|\mathcal{P}|={|G|\choose 2}$, where $|G|$ is the number of nodes in $G$.
Let $\mathfrak{P}_G$ denote the collection of all path systems of $G$.
A path system is called a \emph{shortest path system} if every path is a shortest path connecting the two corresponding nodes.

Given a path system $\mathcal{P}\in\mathfrak{P}_G$ of $G$. 
A \emph{global packing} of $(G,\mathcal{P})$ is a mapping $\omega$ from $\mathcal{P}$ to the set $\{1,2,\ldots,k\}$, for some $k$, such that for any two paths $P,P'\in\mathcal{P}$, $\omega(P)\neq \omega(P')$ provided that $P$ and $P'$ have one or more than one edge in common.
Each element in the image $\{1,2,\ldots,k\}$ is called a \emph{wavelength}, named for its appliction to optical networks.
The \emph{global packing number} of $(G,\mathcal{P})$, denoted by $\Phi(G,\mathcal{P})$, is the minimum integer $k$ to guarantee the existence of a global packing.
The global packing number of $G$, denoted by $\Phi(G)$, is defined to be the smallest global packing number of $(G,\mathcal{P})$ among all path systems $\mathcal{P}\in\mathfrak{P}_G$; i.e., $$\Phi(G):=\min_{\mathcal{P}\in\mathfrak{P}_G}\Phi(G,\mathcal{P}).$$
A path system $\mathcal{P}$ with $\Phi(G,\mathcal{P})=\Phi(G)$ is said to be \emph{ideal}.
An ideal shortest path system is said to be {\em perfect}.

Consider a path system $\mathcal{P}\in\mathfrak{P}_G$.
In \emph{graph coloring} model, let $\mathcal{H}$ be the graph such that $V(\mathcal{H})=\mathcal{P}$, and $P$ is adjacent to $P'$ in $\mathcal{H}$ if and only if $P$ and $P'$ have the some edge(s) in common in $G$.
Then $\Phi(G,\mathcal{P})=\chi(\mathcal{H})$, the \emph{chromatic number} of $\mathcal{H}$.

All terminology and notation on graph theory used throughout this paper can be referred
to the textbook~\cite{West_01}.



\subsection{Bounds derived from basic principles} \label{sec:intro_bound}
Let $||G||$ denote the number of edges in the graph $G$.
A \emph{$G$-packing} is a collection of mutually edge-disjoint subgraphs of $G$.
A $G$-packing is \emph{full} if it contains $||G||$ edges.
Let $\mathcal{P}=\{P_1,P_2,\ldots,P_m\}$ be a path system of $G$, where $m={|G|\choose 2}$, and $\omega$ be a global packing of $(G,\mathcal{P})$.
Then, for each $i$, the preimage of $i$ under $\omega$, $\omega^{-1}(i)$, forms a $G$-packing.
Since a $G$-packing consists of at most $||G||$ edges, a natural lower bound of $\Phi(G,\mathcal{P})$ is given as follows.

\begin{proposition}\label{pro:lowerbound}
Let $\mathcal{P}\in\mathfrak{P}_G$ be a path system of $G$.
Then,
$$\Phi(G,\mathcal{P}) \geq \left\lceil \frac{\sum_{P\in\mathcal{P}} ||P||}{||G||} \right\rceil.$$
\end{proposition}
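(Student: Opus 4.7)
The plan is a straightforward double-counting argument that leverages the observation already highlighted in the paragraph immediately preceding the proposition: for any global packing $\omega:\mathcal{P}\to\{1,\ldots,k\}$, each color class $\omega^{-1}(i)$ consists of pairwise edge-disjoint paths in $G$ and therefore forms a $G$-packing. I would take an optimal $\omega$ realizing $k=\Phi(G,\mathcal{P})$ and exploit this structural fact to control the total edge length of $\mathcal{P}$.

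The key steps, in order, would be as follows. First, fix an optimal global packing $\omega$ with $k=\Phi(G,\mathcal{P})$, so that $\{\omega^{-1}(i):1\leq i\leq k\}$ is a partition of $\mathcal{P}$ into color classes. Second, for each $i\in\{1,\ldots,k\}$, invoke the $G$-packing property to deduce
$$\sum_{P\in\omega^{-1}(i)} ||P|| \;\leq\; ||G||,$$
since a collection of edge-disjoint subgraphs of $G$ can contain at most $||G||$ edges in total. Third, sum this inequality over $i=1,2,\ldots,k$; because the color classes partition $\mathcal{P}$, the left-hand side telescopes into $\sum_{P\in\mathcal{P}} ||P||$, yielding
$$\sum_{P\in\mathcal{P}} ||P|| \;\leq\; k\cdot ||G||.$$
Fourth, rearrange to obtain $k\geq \sum_{P\in\mathcal{P}}||P||/||G||$, and take the ceiling on the right-hand side using the integrality of $k$.

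There is no substantive obstacle; the proposition is essentially the numerical consequence of the structural remark preceding it. The only point worth verifying carefully is that the color classes genuinely partition $\mathcal{P}$ (so that reassembling $\sum_i \sum_{P\in\omega^{-1}(i)} ||P||$ recovers $\sum_{P\in\mathcal{P}} ||P||$ exactly, without omission or double counting), but this is immediate from $\omega$ being a function. Beyond that, the argument is a one-line pigeonhole/averaging bound.
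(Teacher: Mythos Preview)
Your proposal is correct and matches the paper's own reasoning exactly: the paper does not give a separate formal proof but derives the bound from the observation in the preceding paragraph that each color class $\omega^{-1}(i)$ is a $G$-packing with at most $||G||$ edges, which is precisely the counting argument you outline.
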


On the other hand, assume that $\mathcal{P}$ can be partitioned into $k$ subsets, $\mathcal{S}_1,\ldots,\mathcal{S}_k$, each of which forms a $G$-packing. 
One can define a global packing by $\omega:=P\mapsto i$ if $P\in\mathcal{S}_i$ for all $i$. 
It follows that $\Phi(G,\mathcal{P})\leq k$.

\begin{proposition}\label{pro:upperbound}
Let $\mathcal{P}\in\mathfrak{P}_G$ be a path system of $G$.
If $\mathcal{P}$ can be partitioned into $k$ subsets such that each subset forms a $G$-packing, then
$$\Phi(G) \leq \Phi(G,\mathcal{P}) \leq k.$$
\end{proposition}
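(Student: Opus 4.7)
The plan is to verify the two inequalities in the chain $\Phi(G)\leq\Phi(G,\mathcal{P})\leq k$ separately, both of which are essentially immediate from the definitions.

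First, I would dispatch the left inequality $\Phi(G)\leq\Phi(G,\mathcal{P})$ by simply invoking the defining equation $\Phi(G):=\min_{\mathcal{P}'\in\mathfrak{P}_G}\Phi(G,\mathcal{P}')$. Since $\mathcal{P}\in\mathfrak{P}_G$ is one particular path system, its global packing number is at least the minimum taken over all such path systems. No construction is needed.

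Second, for the right inequality $\Phi(G,\mathcal{P})\leq k$, I would explicitly build a valid global packing from the given partition. Suppose $\mathcal{P}=\mathcal{S}_1\cup\mathcal{S}_2\cup\cdots\cup\mathcal{S}_k$ is a disjoint union and each $\mathcal{S}_i$ is a $G$-packing (i.e., its members are pairwise edge-disjoint subgraphs of $G$). Define $\omega:\mathcal{P}\to\{1,2,\ldots,k\}$ by $\omega(P):=i$ whenever $P\in\mathcal{S}_i$. Well-definedness follows from the fact that the partition places each $P$ in exactly one block. To check $\omega$ is a global packing, take any two distinct $P,P'\in\mathcal{P}$ with $\omega(P)=\omega(P')=i$; then $P,P'\in\mathcal{S}_i$, and since $\mathcal{S}_i$ is a $G$-packing they share no edge in $G$. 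Contrapositively, any two paths sharing an edge receive different $\omega$-values, as required. Hence a global packing using $k$ wavelengths exists, so $\Phi(G,\mathcal{P})\leq k$.

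There is essentially no obstacle here: the statement is a direct packaging of the observation made in the paragraph preceding it, and the ``hard'' part — constructing a coloring from an edge-disjoint partition — amounts to naming each block by its index. The only thing to be careful about is distinguishing the minimum-over-path-systems definition of $\Phi(G)$ from the fixed-path-system quantity $\Phi(G,\mathcal{P})$ when writing the first inequality, so that both bounds are stated against the correct object.
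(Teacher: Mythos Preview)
Your proposal is correct and follows exactly the paper's approach: the paragraph preceding the proposition already defines $\omega(P):=i$ for $P\in\mathcal{S}_i$ to obtain $\Phi(G,\mathcal{P})\leq k$, and the inequality $\Phi(G)\leq\Phi(G,\mathcal{P})$ is immediate from the definition of $\Phi(G)$ as a minimum. Your write-up simply fleshes out the verification that $\omega$ is a valid global packing, which the paper leaves implicit.
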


In this paper, the focus is on ring networks, so that $G$ is a cycle.
In spite of its simplicity, the ring topology is of fundamental interest in the study of optical networks~\cite{GCFPFGNP_01,GSCAC_09,PCF_05}.
Let $C_n$ denote a cycle of $n$ nodes.
For the sake of convenience, in this paper we assume the nodes of $C_n$ are labelled clockwise by non-negative integers $0,1,\ldots,n-1$, and $\langle a,b\rangle$ denotes the path consists of edges $\{a,a+1\},\{a+1,a+2\},\ldots,\{b-1,b\}$ (mod $n$).
Note that $\langle a,b\rangle=\{a,b\}$ if $b=a+1$.

The rest of this paper is organized as follows.
In Section~\ref{sec:evencycle} we compute the exact value of $\Phi(C_n)$ for the case $n$ is even by means of Proposition~\ref{pro:lowerbound} and Proposition~\ref{pro:upperbound}. 
As for odd $n$ cases, the exact value of $\Phi(C_n)$ is given in Section~\ref{sec:oddcycle}, where the derivation is based on a greedy construction, which we refer to as
{\em Intelligent Packing algorithm}.
The corresponding ideal path systems of $C_n$ are characterized as well.
In Section~\ref{sec:simulation} we extend the Intelligent Packing algorithm to a general version, referred to as the {\em Length First Packing algorithm}, which is applicable to
general connected graphs.  To demonstrate the effectiveness of the Length First
Packing algorithm we apply it to chain networks and show that they can achieve the global packing number value. 
We also apply it to ring networks with more complicated traffic
load and compare its performance against a random algorithm.
Finally, concluding remarks are given in Section~\ref{sec:conclusion}.

\section{Global packing numbers of even cycles}\label{sec:evencycle}
To derive a lower bound of $\Phi(C_{2n})$, by Proposition~\ref{pro:lowerbound}, it is natural to consider the shortest path system.
Fix two nodes $a,b$ in a cycle $C_{2n}$.
If the distance of $a$ and $b$ is less than $n$, there is only one shortest path that connects them; however, if the distance is equal to $n$ (i.e., $a$ and $b$ are on opposite side), there are two choices of shortest paths.

\begin{lemma}\label{lemma:cycle_even_lower}
For any integer $n\geq 1$, $$\Phi(C_{2n})\geq {n\choose 2} + \left\lfloor\frac{n}{2}\right\rfloor +1.$$
\end{lemma}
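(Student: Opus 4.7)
The plan is to apply Proposition~\ref{pro:lowerbound} to the shortest path system, and then to handle the residual slack (which appears only when $n$ is even) by a structural obstruction on the choices of antipodal paths.

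First I would evaluate $\sum_{P\in\mathcal{P}}\|P\|$ on a shortest path system: for each $1\le d\le n-1$ there are exactly $2n$ pairs at distance $d$ whose shortest arc has length $d$, while for $d=n$ there are $n$ antipodal pairs, each contributing a length-$n$ arc. Hence $\sum_{P\in\mathcal{P}}\|P\|=n^3$ for every shortest path system. For any non-shortest path system the only alternative to a shortest arc of length $d<n$ is the long arc of length $2n-d$, inflating the sum by $2n-2d\ge 2$. Proposition~\ref{pro:lowerbound} then gives $\Phi(C_{2n},\mathcal{P})\ge\lceil (n^3+2)/(2n)\rceil = \lceil n^2/2+1/n\rceil$, which meets $\binom{n}{2}+\lfloor n/2\rfloor+1$ in both parities. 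For a shortest path system with $n$ odd, Proposition~\ref{pro:lowerbound} also directly delivers $\lceil n^2/2\rceil = (n^2+1)/2 = \binom{n}{2}+\lfloor n/2\rfloor+1$. So everything reduces to shortest path systems with $n$ even.

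In that remaining case I would argue by contradiction, assuming $\Phi(C_{2n},\mathcal{P})=n^2/2$. Since $\sum_P\|P\|=n^3=(n^2/2)\cdot 2n$ and each color class has at most $\|C_{2n}\|=2n$ edges, equality forces every color class to be a full $G$-packing, hence to partition the edge set of $C_{2n}$. Therefore every edge lies in exactly $n^2/2$ paths of $\mathcal{P}$. A rotation-symmetry count shows that the non-antipodal shortest paths contribute exactly $\binom{n}{2}$ paths through each edge, so each edge must be covered by precisely $n/2$ of the $n$ antipodal arcs.

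The crux of the proof, and the main obstacle, is to show that this uniform coverage is unachievable. Label antipodal pair $\{i,i+n\}$ by $i\in\{0,\ldots,n-1\}$ and let $s_i\in\{i,i+n\}$ be the starting vertex of its chosen arc. Writing $f(v)$ for the number of antipodal arcs with $s_i=v$ and $\sigma(j)$ for the number covering edge $j$, a one-line telescoping (an arc starting at $v$ covers precisely edges $v,v+1,\ldots,v+n-1$) gives $\sigma(j+1)-\sigma(j)=f(j+1)-f(j+1-n)$, so $\sigma\equiv n/2$ would force $f(k)=f(k+n)$ for every $k$. However by construction, for each $k\in\{0,\ldots,n-1\}$ exactly one of $f(k)$ and $f(k+n)$ is $1$ and the other is $0$ (corresponding to the two possible orientations of the pair $\{k,k+n\}$), so $f(k)\neq f(k+n)$ always. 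This contradiction supplies the missing $+1$ and completes the plan.
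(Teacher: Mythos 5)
Your proof is correct, and its skeleton coincides with the paper's: apply Proposition~\ref{pro:lowerbound} to a shortest path system (which settles odd $n$ and all non-shortest systems at once), then, for a shortest system with $n$ even, rule out $\Phi=\frac{n^2}{2}$ by showing the $n$ antipodal arcs cannot cover every edge exactly $\frac{n}{2}$ times. You diverge in two places, both in the direction of greater rigor. To reach ``every edge lies in exactly $\frac{n}{2}$ antipodal arcs,'' the paper fixes an edge $e$, observes that the $\binom{n}{2}$ shorter paths through $e$ force $\Phi\ge\binom{n}{2}+x$ with $x$ the number of antipodal arcs through $e$, and then averages over edges; you instead note that equality in Proposition~\ref{pro:lowerbound} forces every color class to be a full packing, so every edge lies in exactly $\frac{n^2}{2}$ paths. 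These are equivalent in effect. The substantive difference is the final impossibility argument: the paper orders the $\frac{n}{2}$ arcs through a fixed edge, looks at the edge $\{a_1-1,a_1\}$ just behind the extremal endpoint $a_1$, and argues somewhat informally that it can be covered only $\frac{n}{2}-1$ times; your discrete-derivative identity $\sigma(j+1)-\sigma(j)=f(j+1)-f(j+1+n)$, together with the fact that exactly one of $f(k)$, $f(k+n)$ equals $1$ for each antipodal pair, yields the contradiction globally in one line. Both arguments prove the same fact; yours is cleaner, avoids the case analysis the paper elides, and is easier to verify.
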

\begin{proof}
Let $\mathcal{P}$ be a shortest path system of $C_{2n}$.
Since $$\sum_{P\in\mathcal{P}}||P|| \leq \sum_{P\in\mathcal{P'}}||P||$$ for any path system $\mathcal{P'}\in\mathfrak{P}_{C_{2n}}$ with $\mathcal{P'}\neq\mathcal{P}$, we have $\Phi(C_{2n},\mathcal{P}')\geq\Phi(C_{2n},\mathcal{P})$.
Observe that there are exactly $n$ paths of length $n$ and $2n$ paths of length $i$, for each $i=1,2,\ldots,n-1$. 
By Proposition~\ref{pro:lowerbound}, 
\begin{equation}\label{eq:lower_even_1}
\begin{split}
\Phi(C_{2n}) &= \min_{\mathcal{P}'\in\mathfrak{P}_G}\Phi(G,\mathcal{P}') \geq\Phi(C_{2n},\mathcal{P}) \\
&\geq \left\lceil\frac{n\cdot n + \sum_{i=1}^{n-1}2ni}{2n} \right\rceil = {n\choose 2} + \left\lceil\frac{n}{2}\right\rceil. 
\end{split}
\end{equation}
Hence it follows that $\Phi(C_{2n})\geq {n\choose 2} + \lfloor\frac{n}{2}\rfloor +1$ when $n$ is odd.
In what follows, we consider the case $n$ is even.

Let $\mathcal{P}^*$ be one of the ideal path systems of $C_{2n}$.
There are two cases: $\mathcal{P}^*$ is or is not a shortest path system.
If $\mathcal{P}^*$ is not a shortest path system, by the same argument in \eqref{eq:lower_even_1}, $\Phi(C_{2n},\mathcal{P}^*)$ must be larger than ${n\choose 2}+\frac{n}{2}$, and then we are done.
If $\mathcal{P}^*$ is a shortest path system, we now claim $\Phi(C_{2n},\mathcal{P}^*)\geq{n\choose 2}+\frac{n}{2}+1$.
Suppose not, i.e., $\Phi(C_{2n},\mathcal{P}^*)\leq {n\choose 2}+\frac{n}{2}$.
By \eqref{eq:lower_even_1}, it must be $\Phi(C_{2n},\mathcal{P}^*)={n\choose 2}+\frac{n}{2}$.
Fix an edge $e$, and consider the number of paths of length less than $n$ which contain $e$.
Then there are ${n\choose 2}$ paths: one of length $1$, two of length $2$, $\ldots$, and $n-1$ of length $n-1$.
Since all of them contain $e$, they have to be assigned distinct wavelengths.
That is, $\Phi(C_{2n},\mathcal{P}^*)\geq {n\choose 2}+x$, where $x$ is the number of paths of length $n$ which contain $e$.
Note that this fact is true for all distinct edges.
Let $\mathcal{C}$ be the collection of $n$ paths of length $n$ in $\mathcal{P}^*$.
Now, it suffices to prove that there exists an edge which occurs in $\mathcal{C}$, at least $\frac{n}{2}+1$ times.

\begin{figure}[h]
\centering
\includegraphics[width=2.5in]{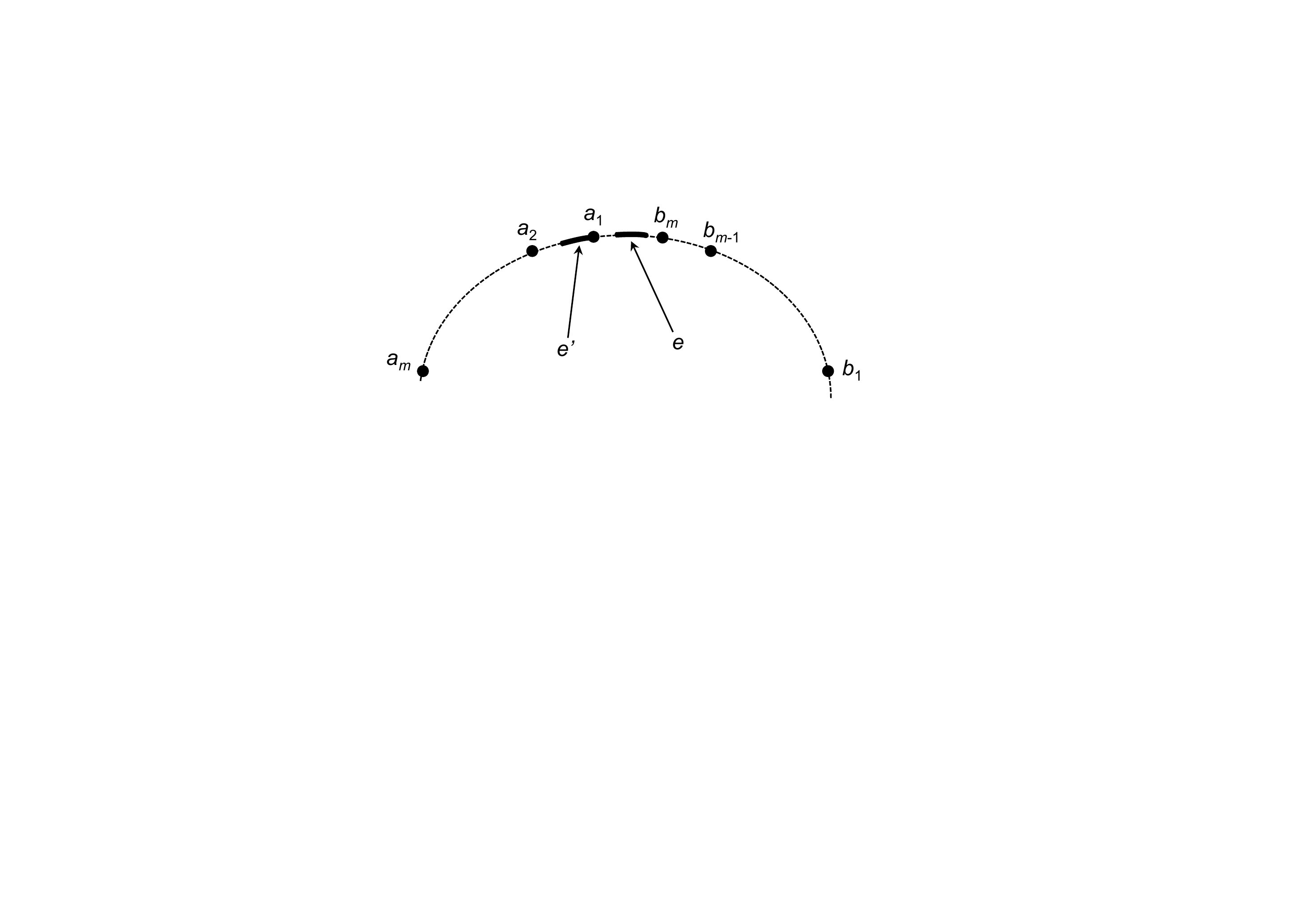}
\caption{The $m$ paths of length $n$ which contain $e$.} 
\label{fig:even-proof}
\end{figure}

Suppose the assertion is not true.
Since there are $2n$ edges in $C_{2n}$, each edge occurs in $\mathcal{C}$ exactly $\frac{n}{2}$ times.
Denote $m=\frac{n}{2}$ and let $e$ be an edge which occurs in paths $\langle a_1,b_1\rangle$, $\langle a_2,b_2\rangle$, $\ldots, \langle a_m,b_m\rangle$.
For convenience, let $a_m\prec\cdots\prec a_2\prec a_1$ and $b_m\prec\cdots\prec b_2\prec b_1$, in which $a\prec b$ denotes that $a$ is ahead of $b$ counterclocksiwely, as illustrated in Fig.~\ref{fig:even-proof}.
Hence $e\subseteq\langle a_1,b_m\rangle$, and all the edges in $\langle a_1,b_b\rangle$ occur $m$ times as well.
Consider the edge $e'=\{a_1-1,a_1\}$ (note that $a_1-1$ may be $a_2$).
Clearly, in these $m$ paths, $e'$ only occurs $m-1$ times.
Since $a_1$ is one of the endpoints of $\langle a_1,b_1\rangle$, it can not be an endpoint of the other paths of length $n$.
This implies that either $e'$ does not occur in the other paths of length $n$ or if it occurs, some of the edges in $\langle a_1,b_b\rangle$ will occur more than $m$ times, a contradiction.
For the first case, $e'$ occurs exactly $m-1$ times among all paths of length $n$, a contradiction.
Hence we complete the proof.
\end{proof}

Now, we will show that $\Phi(C_{2n}) = {n\choose 2}+\lfloor\frac{n}{2}\rfloor+1$ by choosing some one particular shortest path system $\mathcal{P}$.
As an illustration, consider the following example.

\begin{example}\label{exa:20}
{\rm
Let $n=9$ and $\mathcal{P}$ be the shortest path system of $C_{18}$ which contains the following $9$ paths of length $9$:
$$
\begin{array}{lllll}
\langle 0,9\rangle, & \langle 2,11\rangle, & \langle 4,13\rangle, & \langle 6,15\rangle, & \langle 8,17\rangle, \\
\langle 10,1\rangle, & \langle 12,3\rangle, & \langle 14,5\rangle, & \langle 16,7\rangle. & 
\end{array}
$$
Except for above $9$ paths, in $\mathcal{P}$ there are $18$ paths of length $i$, for each $i=1,2,\ldots,8$.
By Proposition~\ref{pro:upperbound}, we aim to partition these $153$ paths into ${9\choose 2}+\lfloor\frac{9}{2}\rfloor+1=41$ $C_{18}$-packings.
First, partition the set $\{1,2,\ldots,8\}$ into subsets $\{1,8\}$, $\{2,7\}$, $\{3,6\}$ and $\{4,5\}$, so that the sum of all elements of each subset is $9$, a factor of $18$.
For each subset, we construct $9$ full $C_{18}$-packings, which consist of all paths with lengths belonging to the subset.
For example, the $9$ full $C_{18}$-packings produced from $\{1,8\}$ are listed below.
$$
\begin{array}{rcccc}
B_0: & \langle 0,1\rangle & \langle 1,9\rangle & \langle 9,10\rangle & \langle 10,0\rangle \\
B_1: & \langle 1,2\rangle & \langle 2,10\rangle & \langle 10,11\rangle & \langle 11,1\rangle \\
B_2: & \langle 2,3\rangle & \langle 3,11\rangle & \langle 11,12\rangle & \langle 12,2\rangle \\
B_3: & \langle 3,4\rangle & \langle 4,12\rangle & \langle 12,13\rangle & \langle 13,3\rangle \\
B_4: & \langle 4,5\rangle & \langle 5,13\rangle & \langle 13,14\rangle & \langle 14,4\rangle \\
B_5: & \langle 5,6\rangle & \langle 6,14\rangle & \langle 14,15\rangle & \langle 15,5\rangle \\
B_6: & \langle 6,7\rangle & \langle 7,15\rangle & \langle 15,16\rangle & \langle 16,6\rangle \\
B_7: & \langle 7,8\rangle & \langle 8,16\rangle & \langle 16,17\rangle & \langle 17,7\rangle \\
B_8: & \langle 8,9\rangle & \langle 9,17\rangle & \langle 17,0\rangle & \langle 0,8\rangle
\end{array}
$$

\noindent
These $9$ full $C_{18}$-packings can be viewed as the results of cyclic rotation of the first one, $B_0$, called the \emph{base packing}.
More precisely, we first cover the edges of $C_{18}$ with two paths of length $1$ and two paths of length $8$ alternatively to form the base packing, and then rotate it to obtain the others.
This ensures that there are no repeated paths among these rotations.
Note that this construction works for the other three subsets: $\{2,7\}$, $\{3,6\}$ and $\{4,5\}$.
We have a total of $36$ full $C_{18}$-packings, which consist of all paths of length $1,2,\ldots,8$.

Next, we shall replace some paths in the full $C_{18}$-packings produced from the subset $\{1,8\}$ with the paths of length $9$.
In $B_0$, $\langle 0,1\rangle$ and $\langle 1,9\rangle$ are replaced by $\langle 0,9\rangle$; in $B_1$, $\langle 10,11\rangle$ and $\langle 11,1\rangle$ are replaced by $\langle 10,1\rangle$; in $B_2$, $\langle 2,3\rangle$ and $\langle 3,11\rangle$ are replaced by $\langle 2,11\rangle$; in $B_3$, $\langle 12,13\rangle$ and $\langle 13,3\rangle$ are replaced by $\langle 12,3\rangle$; and so on.
Following this procedure, in $B_8$, $\langle 8,9\rangle$ and $\langle 9,17\rangle$ are replaced by $\langle 8,17\rangle$.
Finally, those $18$ paths ($9$ of which are of length $1$ and the others are of length $8$) that are taken off will form $5$ $C_{18}$-packings as below.

$$
\begin{array}{rcccccc}
D_0: & \langle 1,9\rangle & \langle 10,11\rangle & \langle 11,1\rangle \\
D_1: & \langle 3,11\rangle & \langle 12,13\rangle & \langle 13,3\rangle \\
D_2: & \langle 5,13\rangle & \langle 14,15\rangle & \langle 15,5\rangle \\
D_3: & \langle 7,15\rangle & \langle 16,17\rangle & \langle 17,7\rangle \\
D_4: & \langle 0,1\rangle & \langle 2,3\rangle & \langle 4,5\rangle & \langle 6,7 \rangle & \langle 8,9 \rangle & \langle 9,17 \rangle
\end{array}
$$
\noindent
Hence $\mathcal{P}$ is completely partitioned into $36+5=41$ $C_{18}$-packings, which achieves the bound of Lemma~\ref{lemma:cycle_even_lower} for $n=9$.
For a general proof, we need the following.
}
\end{example}

\begin{lemma}\label{lemma:partition}
Let $n$ be a positive integer, and let $X=\{x_1,\ldots,x_k\}$ be a set of positive integers with $x_i<\frac{n}{2}$ for all $i$.
If $n=tm$ for some positive integer $t$, where $m=\sum_{i=1}^k x_i$, then the collection of all paths of lengths belonging to $X$ on the cycle $C_n$ can be partitioned into $m$ full $C_n$-packings.
\end{lemma}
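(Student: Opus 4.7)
The plan is to generalize the construction in Example~\ref{exa:20}: I will exhibit a single ``base'' full $C_n$-packing $B_0$ whose paths have lengths from $X$, then obtain the remaining $m-1$ packings as cyclic rotations of $B_0$. A quick counting check confirms that the parameters match: since $x_i<n/2$, there are exactly $n$ paths of length $x_i$ in $C_n$ (each uniquely determined by its clockwise starting vertex), so the total number of edges in the collection to be packed is $n\sum_{i=1}^{k} x_i = nm$. Dividing by the $n$ edges of a full $C_n$-packing, a partition must consist of exactly $m$ parts.

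To build $B_0$, set $s_0=0$ and $s_i = x_1+\cdots+x_i$ (so $s_k=m$), and take
\[
B_0 \;=\; \bigl\{\,\langle jm+s_{i-1},\; jm+s_i\rangle \,:\, 0\leq j\leq t-1,\; 1\leq i\leq k\,\bigr\}.
\]
These $tk$ paths are placed end-to-end clockwise, starting at vertex $0$ and closing up at $tm\equiv 0 \pmod n$, so $B_0$ is a full $C_n$-packing. For $0\leq r\leq m-1$, define $B_r$ to be the rotation of $B_0$ obtained by adding $r$ to every vertex label modulo $n$; each $B_r$ inherits the full $C_n$-packing property and consists of paths whose lengths lie in $X$.

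The crux is then to verify that $B_0,B_1,\ldots,B_{m-1}$ contain every path of each length $x_i$ exactly once. The starting vertex of the length-$x_i$ path sitting in the $j$-th ``block'' of $B_r$ is $jm+s_{i-1}+r\pmod n$. Since $n=tm$, the map $(j,r)\mapsto jm+r$ from $\{0,\ldots,t-1\}\times\{0,\ldots,m-1\}$ onto $\mathbb{Z}_n$ is a bijection, so as $(j,r)$ varies these starting vertices exhaust $\mathbb{Z}_n$ without repetition. Because $x_i<n/2$ also guarantees that a path of length $x_i$ is determined by its clockwise starting vertex, each of the $n$ paths of length $x_i$ appears in exactly one $B_r$, and $\{B_0,\ldots,B_{m-1}\}$ is the desired partition. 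I expect no serious obstacle: the whole argument rests on the modular bijection $(j,r)\mapsto jm+r$, which is immediate from $n=tm$, together with the observation that the condition $x_i\geq 1$ makes the paths in a single block share only endpoints, keeping $B_0$ edge-disjoint.
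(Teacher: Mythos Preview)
Your proposal is correct and follows essentially the same approach as the paper: build a base full $C_n$-packing by laying down paths of lengths $x_1,\ldots,x_k$ end-to-end periodically around the cycle, then take the $m$ cyclic rotations. Your bijection argument $(j,r)\mapsto jm+r$ makes explicit the step the paper phrases as ``two paths of length $x_i$ are completely overlapped if and only if the base $C_n$-packing is rotated by a multiple of $m$ steps,'' so the two proofs are the same in substance; your version is just slightly more detailed. (A minor remark: the hypothesis $x_i<n/2$ is not actually needed for either the count of $n$ paths of length $x_i$ or for the uniqueness of a path given its clockwise starting vertex---both hold for any $1\le x_i\le n-1$---so you may wish to drop those justifications or simply note that the hypothesis is not used in the argument.)
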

\begin{proof}
First, we construct a base $C_n$-packing by periodically putting a path of length $x_1$, a path of length $x_2$, $\ldots$, a path of length $x_k$ one after another on $C_n$ in clockwise direction.
Since $n=tm$, the base $C_n$-packing consists of $t$ paths of length $x_i$, for each $i$, and then clearly is a full packing.
Then, rotate this base $C_n$-packing step by step clockwise to produce the other $m-1$ full $C_n$-packings.
Since any two paths of length $x_i$, for each $i$, are completely overlapped if and only if the base $C_n$-packing is rotated by $m$ or a multiple of $m$ steps, these $m$ full $C_n$-packings have no repeated paths and thus form a partition of the set of all paths of lengths belonging to $X$.
\end{proof}

We are ready for the main result in this section.

\begin{theorem}\label{thm:cycle_even}
For any integer $n\geq 1$, $$\Phi(C_{2n})={n\choose 2} + \left\lfloor\frac{n}{2}\right\rfloor + 1.$$
\end{theorem}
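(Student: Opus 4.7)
The plan is to combine the lower bound from Lemma~\ref{lemma:cycle_even_lower} with a matching upper bound, obtained by exhibiting a specific shortest path system $\mathcal{P}^*$ together with an explicit partition of $\mathcal{P}^*$ into $\binom{n}{2} + \lfloor n/2 \rfloor + 1$ edge-disjoint $C_{2n}$-packings; the upper bound then follows from Proposition~\ref{pro:upperbound}. The construction generalizes Example~\ref{exa:20}.

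The first step handles the paths of lengths $1, 2, \ldots, n-1$. I partition this length set into subsets whose sums divide $2n$. For odd $n$ I take the $(n-1)/2$ pairs $\{i, n-i\}$, each summing to $n$. For even $n$ I take the pairs $\{i, n-i\}$ with $1 \le i < n/2$ together with the singleton $\{n/2\}$. Applying Lemma~\ref{lemma:partition} produces $n$ full packings per pair (with $t = 2$) and, in the even case, an additional $n/2$ full packings from the singleton (with $t = 4$). A direct count in both parities gives a total of exactly $\binom{n}{2}$ full $C_{2n}$-packings, collectively exhausting all paths of lengths $1, \ldots, n-1$.

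The second step selects and absorbs the $n$ length-$n$ paths. Of the $\binom{n}{2}$ packings above, the $n$ coming from the pair $\{1, n-1\}$ are cyclic rotations $B_0, \ldots, B_{n-1}$ of a base packing, and each $B_k$ contains a length-$1$ together with a length-$(n-1)$ path whose edge union is a length-$n$ path. I declare that length-$n$ path to be the shortest path assigned to the corresponding antipodal pair in $\mathcal{P}^*$, and swap the two short paths in $B_k$ for it. The $n$ modified packings then absorb all length-$n$ paths, at the cost of $2n$ displaced paths ($n$ of length $1$ and $n$ of length $n-1$).

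The last step, which I expect to be the crux of the argument, repackages these $2n$ displaced paths into exactly $\lfloor n/2 \rfloor + 1$ additional $C_{2n}$-packings, paralleling $D_0, \ldots, D_{\lfloor n/2 \rfloor}$ in Example~\ref{exa:20}. The natural grouping unites part of the displaced contribution from $B_{2k}$ with the displaced paths from $B_{2k+1}$ into one new packing $D_k$, and collects the remaining displaced length-$1$ paths (together with the two displaced paths from the unpaired $B_{n-1}$ in the odd case) into a final packing. The technical difficulty lies in making the short+long excision in Step 2 consistently across $k$: the decomposition chosen in each $B_k$ must be such that the displaced paths placed inside a common $D_k$ are mutually edge-disjoint, which in turn pins down a specific pattern of orientations for the length-$n$ paths in $\mathcal{P}^*$ across the antipodal pairs. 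Once edge-disjointness within each $D_k$ is verified by direct inspection, the counts $(n-1)/2 + 1$ for odd $n$ and $n/2 + 1$ for even $n$ both equal $\lfloor n/2 \rfloor + 1$, matching the lower bound of Lemma~\ref{lemma:cycle_even_lower}.
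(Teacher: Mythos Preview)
Your proposal is correct and follows essentially the same route as the paper's proof: the paper also combines Lemma~\ref{lemma:cycle_even_lower} with an explicit shortest path system, uses Lemma~\ref{lemma:partition} on the length pairs $\{i,n-i\}$ to produce $\binom{n}{2}$ full packings, then swaps a length-$1$/length-$(n-1)$ pair inside each rotation $B_i$ of the $\{1,n-1\}$ family for a length-$n$ path (alternating the side of the swap with the parity of $i$), and finally repacks the displaced short paths into the extra $D_0,\dots,D_{\lfloor n/2\rfloor}$ exactly as you outline. Your explicit treatment of the singleton $\{n/2\}$ for even $n$ fills in what the paper leaves as ``similar fashion, omitted.''
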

\begin{proof}
By Lemma~\ref{lemma:cycle_even_lower} and Proposition~\ref{pro:upperbound}, it suffices to prove that there exists an ideal path system, $\mathcal{P}$, of $C_{2n}$ such that all paths of $\mathcal{P}$ can be partitioned into ${n\choose 2} + \left\lfloor\frac{n}{2}\right\rfloor + 1$ $C_{2n}$-packings.
Let $\mathcal{P}$ be the shortest path system of $C_{2n}$ which contains paths $\langle 0,n\rangle, \langle 2,n+2\rangle, \ldots, \langle 2n-2,n-2\rangle$ if $n$ is odd, or $\langle 0,n\rangle, \langle 2,n+2\rangle, \ldots, \langle n-2,2n-2\rangle, \langle n+1,1\rangle, \langle n+3,3\rangle, \ldots, \langle 2n-1,n-1\rangle$ otherwise.

Consider the case that $n$ is odd.
By Lemma~\ref{lemma:partition}, for $1\leq i\leq\lfloor\frac{n}{2}\rfloor$, all paths with lengths belonging to $\{i,n-i\}$ can be partitioned into $n$ full $C_{2n}$-packings.
Since $n\lfloor\frac{n}{2}\rfloor={n\choose 2}$ in the case $n$ is odd, the paths of length less than $n$ will produce ${n\choose 2}$ full $C_{2n}$-packings.
Now, consider the $n$ full $C_{2n}$-packings associated with the path lengths $1$ and $n-1$.
For $i=0,1,\ldots,n-1$, denote by $B_i$ the full $C_{2n}$-packing that contains the paths $\langle i,i+1\rangle, \langle i+1,i+n\rangle, \langle i+n,i+n+1\rangle$ and $\langle i+n+1,i\rangle$.
For each $i$, if $i$ is even, replace the two paths $\langle i,i+1\rangle, \langle i+1,i+n\rangle$ of $B_i$ with $\langle i,i+n\rangle$; otherwise, replace the two paths $\langle i+n,i+n+1\rangle, \langle i+n+1,i\rangle$ of $B_i$ with $\langle i+n,i\rangle$.
See Fig.~\ref{fig:replacement} for an illustration.
Let $B_i'$ be the set of paths in $B_i$ which are replaced by a path of length $n$, for $0\leq i\leq n-1$.
More precisely, $B_i'=\{\langle i,i+1\rangle, \langle i+1,i+n\rangle\}$ if $i$ is even, and $B_i'=\{\langle i+n,i+n+1\rangle, \langle i+n+1,i\rangle\}$ otherwise.
The paths in $\mathcal{B}:=\bigcup_{i=0}^{n-1} B_i'$ are rearranged as follows.
For $i=0,1,\ldots,\lfloor\frac{n}{2}\rfloor-1$, let $$D_i=B_{2i}\cup B_{2i+1}\setminus \{\langle 2i,2i+1\rangle\}.$$ 
And, let $$D_{\lfloor\frac{n}{2}\rfloor}=\left\{\langle 2i,2i+1\rangle:\,i=0,1,\ldots,\left\lfloor\frac{n}{2}\right\rfloor-1\right\}\cup B_{n-1}'.$$
It is easy to see that $D_0,D_1,\ldots,D_{\lfloor\frac{n}{2}\rfloor}$ form a partition of $\mathcal{B}$, and each of them is a $C_{2n}$-packing.
Hence we ultimately partition $\mathcal{P}$ into ${n\choose 2}+\lfloor\frac{n}{2}\rfloor+1$ $C_{2n}$-packings.

\begin{figure*}
\centering
\includegraphics[width=6in]{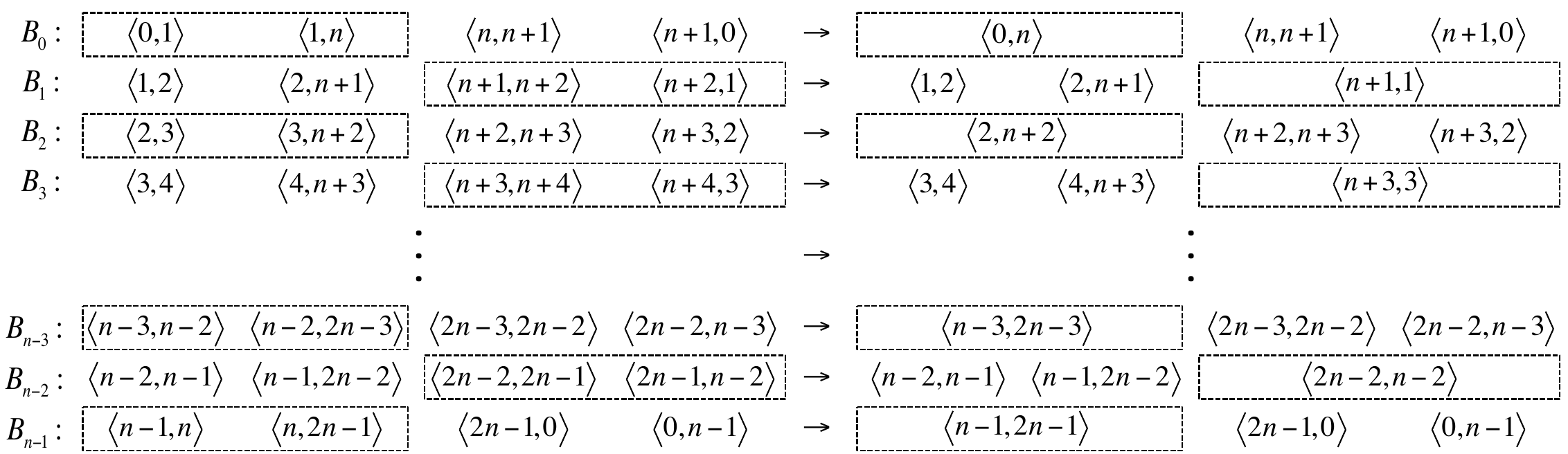}
\caption{Replace some paths of $B_i$ with a path of length $n$.} \label{fig:replacement}
\end{figure*}

The case $n$ is even can be dealt with in a similar fashion.
We omit the detail, and the proof is completed.
\end{proof}

We note here that `shortest' is not a necessary condition for a path system to be ideal.
In $C_4$, the following path system, see Table~I, is ideal but not shortest, for instance.
The path marked by $*$ indicates a non-shortest path.

\begin{table}\label{tab:counterexample}
\begin{tabular}{|c|c|c|}
\hline
node pair &
\begin{tabular}{c}
path \\
system
\end{tabular} & 
\begin{tabular}{c}
receiving \\
wavelength
\end{tabular} \\ \hline \hline
$\{0,1\}$ & ${}^*\langle 1,0\rangle$ & 1 \\ \hline
$\{0,2\}$ & $\langle 0,2\rangle$ & 2 \\ \hline
$\{0,3\}$ & $\langle 3,0\rangle$ & 2 \\ \hline
$\{1,2\}$ & $\langle 1,2\rangle$ & 3 \\ \hline
$\{1,3\}$ & $\langle 3,1\rangle$ & 3 \\ \hline
$\{2,3\}$ & $\langle 2,3\rangle$ & 2 \\ \hline
\end{tabular}
\caption{An example that a non-shortest path system is ideal.}
\end{table}
In the case that $n$ is odd, however, an ideal path systems of $C_{2n}$ must be perfect.

\begin{corollary}\label{coro:even}
Let $n>1$ be an odd integer.
If $\mathcal{P}$ is an ideal path system of $C_{2n}$, then $\mathcal{P}$ is perfect.
\end{corollary}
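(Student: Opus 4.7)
The plan is to leverage the formula $\Phi(C_{2n})=(n^2+1)/2$ (which follows from Theorem~\ref{thm:cycle_even} since $\binom{n}{2}+\lfloor n/2\rfloor+1=(n^2+1)/2$ when $n$ is odd) as a tight upper bound on edge load. For any path system $\mathcal{P}$ and any edge $e$ of $C_{2n}$, let $L(e)$ denote the number of paths of $\mathcal{P}$ passing through $e$. Since any two such paths conflict, $L(e)\le\Phi(C_{2n},\mathcal{P})$, and if $\mathcal{P}$ is ideal then $L(e)\le(n^2+1)/2$ for every edge $e$.

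The central step, and the one I expect to require the most care, is to compute $L(e)+L(e')$ exactly for any diametrically opposite pair of edges $e,e'$ on $C_{2n}$ (i.e., at cyclic distance $n$). I would decompose the sum contribution-by-contribution over the node pairs: (a) each of the $n$ diameter pairs contributes exactly $1$, since each of its two length-$n$ paths is a half of the cycle and hence covers exactly one of $\{e,e'\}$; (b) for each non-diameter pair whose shortest arc contains $e$ --- a set of size $\binom{n}{2}$, as counted in Lemma~\ref{lemma:cycle_even_lower} --- the edge $e'$ automatically lies on the complementary long arc (since a shortest arc of length less than $n$ cannot contain both of two opposite edges), so whichever of the two candidate paths is chosen, exactly one of $\{e,e'\}$ is covered, giving contribution $1$; analogously for the $\binom{n}{2}$ pairs whose shortest contains $e'$; (c) each of the remaining $n(n-1)$ non-diameter pairs --- for which neither $e$ nor $e'$ lies on the short arc, so both lie on the long --- contributes $0$ if routed via its shortest path and $2$ if routed via its non-shortest path. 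Summing yields the identity
\[
L(e)+L(e')\;=\;n^2+2\,m(e,e'),
\]
where $m(e,e')$ denotes the number of non-shortest paths in $\mathcal{P}$ whose arc contains both $e$ and $e'$.

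Comparing this identity with the ideal-load bound $L(e)+L(e')\le n^2+1$ immediately forces $m(e,e')=0$ for every pair of diametrically opposite edges. The argument then finishes with a simple geometric observation: any path in $C_{2n}$ of length at least $n+1$ contains a pair of diametrically opposite edges --- its first edge and the edge $n$ positions further along the path are opposite and both lie on the path. Since any non-shortest path has length at least $n+1$, its presence in $\mathcal{P}$ would produce some opposite pair with $m(e,e')\ge 1$, contradicting the previous conclusion. Therefore $\mathcal{P}$ consists entirely of shortest paths, and being ideal, it is perfect.
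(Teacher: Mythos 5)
Your argument is correct, and it rests on the same key object as the paper's proof: the combined load of a pair of antipodal (diametrically opposite) edges. The difference is in execution. The paper reaches the conclusion by starting from the shortest path system, performing the replacements one at a time, and tracking the parity invariant $N_{t}(e)+N_{t}(e')=2\binom{n}{2}+2x$; it then invokes the total-length inequality from the proof of Lemma~\ref{lemma:cycle_even_lower} to argue that \emph{some} antipodal pair must have $x\ge 1$, and finally adds in the $n$ diameter paths separately. You instead prove the exact closed-form identity $L(e)+L(e')=n^{2}+2\,m(e,e')$ in one shot by classifying all node pairs into the three cases (diameter pairs, pairs whose short arc meets exactly one of $e,e'$, and pairs whose long arc contains both), which absorbs the diameter paths into the count from the start; and you replace the paper's averaging/existence step with the explicit geometric observation that every path of length at least $n+1$ contains an antipodal edge pair, so the violating pair can be pointed to directly. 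Your version is cleaner and makes transparent exactly where the parity of $n$ enters: for odd $n$ the ideal bound gives $L(e)+L(e')\le n^{2}+1$, forcing $m(e,e')=0$, whereas for even $n$ the bound is $n^{2}+2$ and one non-shortest path per antipodal pair is tolerated --- consistent with the $C_4$ counterexample in Table~I.
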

\begin{proof}
Suppose $\mathcal{P}$ is a non-shortest path system of $C_{2n}$ satisfying that $\Phi(\mathcal{P},C_{2m})=\Phi(\mathcal{P})$.
Assume that there are $t$ non-shortest paths in $\mathcal{P}$, and let $\mathcal{P}=\mathcal{P}^{(t)}\uplus\mathcal{P}_n$, where $\mathcal{P}_n$ denotes the collection of all paths of length $n$ and $\uplus$ refers to the disjoint union operation.
Let $\mathcal{P}^{(0)}$ denote the set of shortest paths of length less than $n$ in $C_{2n}$.
Then, $$\mathcal{P}^{(0)}\leadsto\mathcal{P}^{(1)}\leadsto\mathcal{P}^{(2)}\leadsto\cdots\leadsto\mathcal{P}^{(t)},$$ where $\leadsto$ is the operation that replaces some shortest path with its corresponding non-shortest path having the same endpoints.

Fix a pair of antipodal edges $e,e'$, and let $N_i(e)$ and $N_i(e')$ be the number of paths in $\mathcal{P}^{(i)}$ which contain $e$ and $e'$, respectively, for $i=0,1,\ldots,t$.
By \eqref{eq:lower_even_1}, it is obvious that $N_0(e)=N_0(e')={n\choose 2}$. 
Let $P\in\mathcal{P}^{(i)}\setminus\mathcal{P}^{(i+1)}$ and $\widehat{P}\in\mathcal{P}^{(i+1)}\setminus\mathcal{P}^{(i)}$ with the same endpoints, i.e., $P$ is a shortest path while $\widehat{P}$ is not.
Since $P$ contains at most one of $e$ and $e'$, there are two cases as follows.
If $P$ contains one of them, say $e$, then $N_{i+1}(e)=N_i(e)-1$ and $N_{i+1}(e')=N_i(e')+1$.
If $P$ contains neither $e$ nor $e'$, then $N_{i+1}(e)=N_i(e)+1$ and $N_{i+1}(e')=N_i(e')+1$.
This concludes that $N_t(e)+N_t(e')=2{n\choose 2}+2x$, for some integer $x\geq 0$.
Note that this fact is true for all distinct pairs of antipodal edges.
Since $\mathcal{P}^{(t)}$ contains some non-shortest paths, by \eqref{eq:lower_even_1} there exists some pair of antipodal edges $e,e'$ such that $N_t(e)+N_t(e')\geq 2{n\choose 2}+2$.
In addition, since either $e$ or $e'$ occurs on any paths in $\mathcal{P}_n$, one of $e$ and $e'$ is contained in at least ${n\choose 2}+\lceil\frac{n}{2}\rceil+1={n\choose 2}+\lfloor\frac{n}{2}\rfloor+2$ distinct paths in $\mathcal{P}^{(t)}\uplus\mathcal{P}_n(=\mathcal{P})$.
Hence the path system $\mathcal{P}$ needs at least ${n\choose 2}+\lfloor\frac{n}{2}\rfloor+2$ wavelengths, and thus is not an ideal path system due to Theorem~\ref{thm:cycle_even}.
This completes the proof.
\end{proof}

\section{Global packing numbers of odd cycles}\label{sec:oddcycle}
In this section we deal with the global packing number of $C_{2n+1}$. 
Similar to Section~\ref{sec:evencycle}, we use shortest path systems to estimate the lower bound of $\Phi(C_{2n+1})$.
Notice that in $C_{2n+1}$ the shortest path between any two nodes is unique, that is,
$$P_{\{a,b\}}=
\begin{cases}
\langle a,b\rangle, & \text{ if } a-b>n \text{ (mod $2n+1$),} \\
\langle b,a\rangle, & \text{ if } a-b\leq n \text{ (mod $2n+1$).}
\end{cases}
$$
Hence there is no confusion on choosing the shortest path system.

\begin{lemma}\label{lemma:cycle_odd_lower}
For any integer $n\geq 1$, $$\Phi(C_{2n+1})\geq {n+1\choose 2}.$$
\end{lemma}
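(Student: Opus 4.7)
The plan is to imitate the opening argument of Lemma~\ref{lemma:cycle_even_lower}: apply Proposition~\ref{pro:lowerbound} to the shortest path system of $C_{2n+1}$ and observe that the resulting fraction is already an integer, so no ``ceiling slack'' has to be recovered by an ad hoc combinatorial argument as in the even case.

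First I would record the uniqueness of shortest paths in $C_{2n+1}$ (noted just before the lemma), which means the shortest path system $\mathcal{P}$ of $C_{2n+1}$ is unique and, moreover, $\sum_{P\in\mathcal{P}}\|P\|\le \sum_{P\in\mathcal{P}'}\|P\|$ for every $\mathcal{P}'\in\mathfrak{P}_{C_{2n+1}}$, with equality iff $\mathcal{P}'=\mathcal{P}$. By Proposition~\ref{pro:lowerbound}, this monotonicity immediately gives
\[
\Phi(C_{2n+1})=\min_{\mathcal{P}'\in\mathfrak{P}_{C_{2n+1}}}\Phi(C_{2n+1},\mathcal{P}')\ \ge\ \left\lceil\frac{\sum_{P\in\mathcal{P}}\|P\|}{\|C_{2n+1}\|}\right\rceil,
\]
so it remains only to evaluate the right-hand side for the shortest path system.

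Next I would count the paths of $\mathcal{P}$ by length. For each length $i\in\{1,2,\ldots,n\}$, rotating around $C_{2n+1}$ produces exactly $2n+1$ shortest paths of length $i$, and these exhaust $\mathcal{P}$ since $|\mathcal{P}|=\binom{2n+1}{2}=(2n+1)\cdot n$. Hence
\[
\sum_{P\in\mathcal{P}}\|P\|\;=\;\sum_{i=1}^{n}(2n+1)\,i\;=\;(2n+1)\cdot\frac{n(n+1)}{2}.
\]
Dividing by $\|C_{2n+1}\|=2n+1$ gives $\dfrac{n(n+1)}{2}=\binom{n+1}{2}$, which is already an integer, so the ceiling is superfluous and
\[
\Phi(C_{2n+1})\ \ge\ \binom{n+1}{2}.
\]

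There is essentially no obstacle here: unlike the even case, the total-length estimate for the (unique) shortest path system lands exactly on an integer, so Proposition~\ref{pro:lowerbound} delivers the claimed bound without needing the parity/antipodal-edge refinement used in Lemma~\ref{lemma:cycle_even_lower}. The only point requiring a line of justification is the monotonicity step, namely that replacing any path by a non-shortest alternative strictly increases $\sum_P\|P\|$ (using that in an odd cycle every pair of nodes has a unique shortest path), which suffices to extend the bound from $\mathcal{P}$ to all path systems in $\mathfrak{P}_{C_{2n+1}}$.
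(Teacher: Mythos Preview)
Your proposal is correct and follows essentially the same route as the paper: take the unique shortest path system $\mathcal{P}$ of $C_{2n+1}$, note that it minimizes $\sum_P\|P\|$ among all path systems, count $2n+1$ paths of each length $i=1,\ldots,n$, and apply Proposition~\ref{pro:lowerbound} to obtain $\Phi(C_{2n+1})\ge\binom{n+1}{2}$. If anything, your write-up of the monotonicity step is slightly cleaner than the paper's chain of inequalities.
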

\begin{proof}
Let $\mathcal{P}$ be the shortest path system of $C_{2n+1}$.
In $\mathcal{P}$ there are exactly $2n+1$ paths of length $i$, for $i=1,2,\ldots,n$. 
Since $\sum_{P\in\mathcal{P}'}||P|| \geq \sum_{P\in\mathcal{P}}||P||$ for any path system $\mathcal{P}'\in\mathfrak{P}_{C_{2n+1}}$, by Proposition~\ref{pro:lowerbound}, we have
\begin{align*}
\Phi(C_{2n+1}) &= \min_{\mathcal{P}'\in\mathfrak{P}_{C_{2n+1}}}\Phi(C_{2n+1},\mathcal{P}')\geq\Phi(C_{2n+1},\mathcal{P}) \\
&\geq \left\lceil\frac{\sum_{i=1}^n(2n+1)i}{2n+1} \right\rceil = {n+1\choose 2}.
\end{align*}
\end{proof}

Now, we introduce an algorithm, named \emph{Intelligent Packing (IP)}, to produce a global packing for the shortest path system of $C_{2n+1}$.
We use a $(2n+1)\times (2n+1)$ symmetric array, denoted by $\GP$, to represent the resulting global packing $\omega$, i.e., $\GP(a,b)$ and $\GP(b,a)$ indicate the value $\omega(P_{\{a,b\}})$.
The addition or subtraction herein is taken modulo $2n+1$.

\medskip

\begin{algorithm}
  \SetAlgoLined
  \SetKwInOut{Input}{input}\SetKwInOut{Output}{output}
  \Input{the shortest path system of $C_{2n+1}$}
  \Output{the maximal visited wavelength index, \emph{total}}
  initialization: $total=0$, $\ell=n$, and $\GP(a,b)=0$ for all pairs $(a,b)$ \;
  \While{$\ell>0$}
  {
    \For{$i=0,1,2,\ldots,2n$}
 	{
 		\If{$\GP(i,i+\ell)=0$}{$\GP(i,i+\ell),\GP(i+\ell,i) \gets$ the least available wavelength index\;}
 		\If{$\GP(i,i-\ell)=0$}{$\GP(i,i-\ell),\GP(i-\ell,i) \gets$ the least available wavelength index\;}
 	}
 	update \emph{total} \;
    $\ell\gets \ell-1$\;
  }
    \Return \emph{total} \;
  \caption{Intelligent Packing (IP)}
  \label{alg:GGP}
\end{algorithm}

\medskip

The main idea of IP is to greedily assign the smallest free wavelength index to paths of $\mathcal{P}$ in the descending order of the path length.
For the paths with the same length, we arrange them in the order of their smaller endpoint labels.
After all paths of lengths lager than or equal to $\ell$ receiving their wavelength indices, denote by $T_{\ell}$ the maximal visited wavelength index, and by $\mathcal{L}_{\ell}(k)$ the set of edges in which the wavelength indexed $k$ is free, for $1\leq k\leq T_{\ell}$.
For the sake of convenience, elements of $\mathcal{L}_{\ell}(k)$, called \emph{idle bands} on wavelength $k$ after the $\ell$-path round, are represented as ordered pairs $(s,t)$, indicating a series of adjacent edges $\{s,s+1\}, \{s+1,s+2\}, \ldots, \{t-1,t\}$ (mod $2n+1$), whenever they are all in $\mathcal{L}_{\ell}(k)$.
As an illustration, consider the following example: $n=5$.

\begin{example}
{\rm 
Let $\mathcal{P}$ be the shortest path system of $C_{11}$.
In the first round ($\ell=5$) of the IP algorithm, the $11$ paths of length $5$ are arranged in the order:  $\langle 0,5\rangle$, $\langle 6,0\rangle$, $\langle 1,6\rangle$, $\langle 7,1\rangle$, $\langle 2,7\rangle$, $\langle 8,2\rangle$, $\langle 3,8\rangle$, $\langle 9,3\rangle$, $\langle 4,9\rangle$, $\langle 10,4\rangle$ and $\langle 5,10\rangle$.
Then, paths $\langle 0,5\rangle$ and $\langle 6,0\rangle$ receive wavelength index $1$, paths $\langle 1,6\rangle$ and $\langle 7,1\rangle$ receive wavelength index $2$,  $\ldots$, path $\langle 5,10\rangle$ receives wavelength index $6$.
Fig.~\ref{fig:11}(a) is a graphic representation of the wavelength assignment.
In Fig.~\ref{fig:11}, the $x$- and $y$-axis respectively represent the node labels and wavelength indices, and an $1\times\ell$ rectangle with bold boundary on the $k$-layer indicates that the corresponding path of length $\ell$ receives wavelength index $k$.
In addition, the circled numbers represent the order of edges in receiving wavelength indices in each round.
One can see that the maximal visited wavelength index $T_5=6$, and the corresponding idle bands $\mathcal{L}_5(k)=\{(k+4,k+5)\}$, for $1\leq k\leq 5$, and $\mathcal{L}_5(6)=\{(10,5)\}$.

\begin{figure*}
\centering
\includegraphics[width=6in]{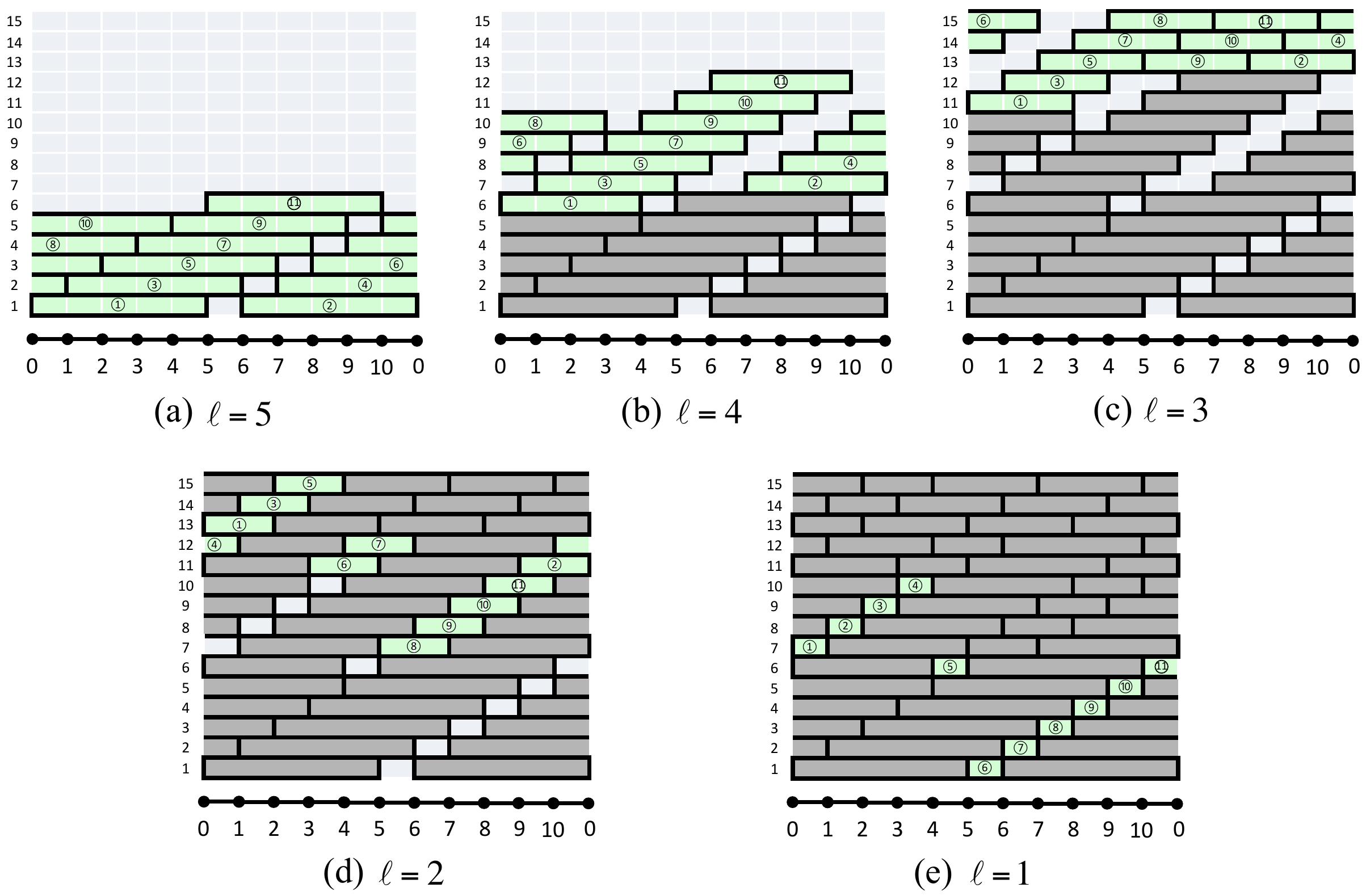}
\caption{A round-by-round illustration of Algorithm~\ref{alg:GGP} for $n=5$.} \label{fig:11}
\end{figure*}

In the second round ($\ell=4$), the $11$ paths of length $4$ are arranged in the order:  $\langle 0,4\rangle$, $\langle 7,0\rangle$, $\langle 1,5\rangle$, $\langle 8,1\rangle$, $\langle 2,6\rangle$, $\langle 9,2\rangle$, $\langle 3,7\rangle$, $\langle 10,3\rangle$, $\langle 4,8\rangle$, $\langle 5,9\rangle$ and $\langle 6,10\rangle$.
Under the wavelength assignment of the the paths of length $5$, path $\langle 0,4\rangle$ can receive wavelength index $6$, path $\langle 7,0\rangle$ can receive wavelength index $7$, path $\langle 1,5\rangle$ can receive wavelength index $7$, and the others' wavelength indices are illustrated in Figure~\ref{fig:11}(b).
After the first two rounds, one can see that $T_4=12$, and the corresponding idle bands are listed as follows.
\begin{itemize}
\item $\mathcal{L}_4(k)=\mathcal{L}_5(k)$, for $1\leq k\leq5$.
\item $\mathcal{L}_4(6)=\{(4,5),(10,0)\}$.
\item $\mathcal{L}_4(k)=\{(k+4,k+5),(k-2,k)\}$ for $7\leq k\leq10$.
\item $\mathcal{L}_4(k)=\{(k-2,k+5)\}$ for $11\leq k\leq 12$.
\end{itemize}
The remaining three rounds ($\ell=3,2$ and $1$) are shown in Figure~\ref{fig:11}(c) -- (e).
The maximal visited wavelength index is $15={6\choose 2}$, attaining the bound proposed in Lemma~\ref{lemma:cycle_odd_lower}.
This implies $\Phi(C_{11})=15$.
The resulting $\GP$ is listed in Table~II.  
}
\end{example}

\begin{table}[h]\label{tab:C11}
\scriptsize
\begin{tabular}{c|ccccccccccc}
 & 0 & 1 & 2 & 3 & 4 & 5 & 6 & 7 & 8 & 9 & 10 \\ \hline
0 & $-$ & 7 & 13 & 11 & 6 & 1 & 1 & 7 & 13 & 11 & 6 \\
1 & 7 & $-$ & 8 & 14 & 12 & 7 & 2 & 2 & 8 & 14 & 12 \\
2 & 13 & 8 & $-$ & 9 & 15 & 13 & 8 & 3 & 3 & 9 & 15 \\
3 & 11 & 14 & 9 & $-$ & 10 & 11 & 14 & 9 & 4 & 4 & 10 \\
4 & 6 & 12 & 15 & 10 & $-$ & 6 & 12 & 15 & 10 & 5 & 5 \\
5 & 1 & 7 & 13 & 11 & 6 & $-$ & 1 & 7 & 13 & 11 & 6 \\
6 & 1 & 2 & 8 & 14 & 12 & 1 & $-$ & 2 & 8 & 14 & 12 \\
7 & 7 & 2 & 3 & 9 & 15 & 7 & 2 & $-$ & 3 & 9 & 15 \\
8 & 13 & 8 & 3 & 4 & 10 & 13 & 8 & 3 & $-$ & 4 & 10 \\
9 & 11 & 14 & 9 & 4 & 5 & 11 & 14 & 9 & 4 & $-$ & 5 \\
10 & 6 & 12 & 15 & 10 & 5 & 6 & 12 & 15 & 10 & 5 & $-$ \\
\end{tabular}
\normalsize
\caption{The global packing array produced from the IP algorithm for $n=5$.}
\end{table}

The wavelength arrangement of the IP algorithm can be completely characterized in the following three lemmas.

\begin{lemma}\label{lemma:GP_largelength}
For $\lceil\frac{n}{2}\rceil+1 \leq\ell\leq n$ we have
\begin{enumerate}[(a)]
\item $\d \GP(i,i+\ell) = 
	\begin{cases}
	n(n-\ell)+i+1 & \text{ if } 0\leq i\leq 2n-\ell, \\
	n(n-\ell)+i-n & \text{ if } 2n-\ell+1 \leq i \leq 2n;
	\end{cases}$
\smallskip
\item $T_{\ell}=(n+1)(n-\ell+1)$;
\smallskip
\item $\mathcal{L}_{\ell}(n(n-\ell)+i+1)$ \\ 
$\d= 
	\begin{cases}
	\{(i-n+\ell, i), (\ell+i,n+i)\} ~~\text{if } 0\leq i\leq n-\ell-1, \\
	\{(i-n+\ell, i), (\ell+i,n+i+1)\} ~~\text{if } n-\ell \leq i \leq n-1.
	\end{cases}$
\end{enumerate}
For $\lceil\frac{n}{2}\rceil+1 \leq\ell < n$ we have
\begin{enumerate}[(d)]
\item $\mathcal{L}_{\ell}(i) = \mathcal{L}_{\ell+1}(i)$, $\forall i=1,2,\ldots, n(n-\ell)$.
\end{enumerate}
\end{lemma}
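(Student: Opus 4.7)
The natural approach is downward induction on $\ell$, starting from the base case $\ell=n$ and decreasing through the range $\lceil n/2\rceil+1\leq \ell<n$. Parts (a), (b) and (c) are carried jointly in the induction, and (d) then falls out essentially for free at each inductive step.

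\textbf{Base case} $\ell=n$: Simply run the first round of IP by hand. For $i=0,1,\ldots,n-1$ the two paths $\langle i,i+n\rangle$ and $\langle i+n+1,i\rangle$ are edge-disjoint (together they omit only the edge $\{i+n,i+n+1\}$), so they can share one wavelength. A sub-induction on $i$ shows this wavelength must be $i+1$, because each of the previously placed wavelengths $1,\ldots,i$ blocks at least one edge of the current pair. For $i=n$, the lone remaining path $\langle n,2n\rangle$ overlaps every wavelength in use so far and is forced onto a fresh wavelength $n+1$, yielding $T_n=n+1$ and all the values in (a). Formula (c) at $\ell=n$ then collapses (the first band $(i,i)$ is empty) to the single edge $\{n+i,n+i+1\}$ per wavelength, which I verify directly.

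\textbf{Inductive step}: Suppose (a)--(d) hold at $\ell+1$. Iterating (c) and (d) of the hypothesis shows that every idle band of a wavelength indexed at most $n(n-\ell)$ has size at most $n-\ell-1$. Since $\ell\geq\lceil n/2\rceil+1$ forces $n-\ell-1<\ell$, no length-$\ell$ path fits into any such band; this gives (d) at $\ell$ immediately. The only wavelengths from round $\ell+1$ able to absorb length-$\ell$ paths are the ``large-band'' wavelengths indexed $n(n-\ell)+1,\ldots,T_{\ell+1}$, which are not part of (c). Tracing the IP algorithm through the previous rounds shows that each such wavelength still possesses a single contiguous idle arc of length at least $\ell$. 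I would then run round $\ell$ in the algorithm's prescribed order: the first path in each ``sweep'' snaps onto one of these recycled large-band wavelengths, and the remaining paths spill onto the fresh wavelengths $T_{\ell+1}+1,\ldots,T_\ell$. Matching the outcome against the closed form in (a) splits naturally into the cases $0\leq i\leq 2n-\ell$ (paths placed without wraparound) and $2n-\ell+1\leq i\leq 2n$ (paths wrapping past node $0$, reusing the earliest new wavelengths). With (a) in hand, (b) follows from the simple count $T_\ell=T_{\ell+1}+(n+1)$, and (c) is a direct recomputation of the idle edges on each of the $n$ newly affected wavelengths.

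The \emph{main obstacle} is maintaining control over the large-band wavelengths that (c) deliberately does not characterize: they carry the bulk of the reuse capacity in each round, and their exact shape is what drives the recursion. The cleanest route is to carry an auxiliary invariant in parallel with (a)--(d) asserting that every such wavelength has a \emph{single} contiguous idle band of a precisely known length, and then to verify that this invariant is preserved by the greedy choices of IP. The hypothesis $\ell\geq\lceil n/2\rceil+1$ is invoked twice: once to rule out any reuse of small-band wavelengths (which yields (d)), and once to guarantee the large-band wavelengths retain enough room to host length-$\ell$ paths.
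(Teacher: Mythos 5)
Your proposal is correct and follows essentially the same downward induction as the paper's proof: base case $\ell=n$ by direct computation, then (d) from the short-idle-band argument, then reuse of the top wavelengths $n(n-\ell)+1,\ldots,T_{\ell+1}$ followed by fresh ones; the paper obtains your ``auxiliary invariant'' for free from induction hypothesis (a), which already shows each of those wavelengths is occupied by exactly one path of length $\ell+1$ and hence has a single idle arc of length $2n-\ell$. (Minor slip: the longest idle band among wavelengths indexed at most $n(n-\ell)$ is $n-\ell$, not $n-\ell-1$, but this is still less than $\ell$, so your argument for (d) stands.)
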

\begin{proof}
These statements are proved by induction on $\ell$.
When $\ell=n$ (the first round), the paths of length $n$ are considered in the order $\langle 0,n\rangle$, $\langle n+1,0\rangle$, $\langle 1,n+1\rangle$, $\langle n+2,1\rangle$, $\ldots$, $\langle n-1,2n-1\rangle$, $\langle 2n,0\rangle$ and $\langle n,2n\rangle$.
Following the greedy strategy, it is easy to see that $\GP(0,n)=\GP(n+1,0)=1$, $\GP(1,n+1)=\GP(n+2,1)=2$, $\ldots$, $\GP(n-1,2n-1)=\GP(2n,n-1)=n$,  and $\GP(n,2n)=n+1$.
Then we have $T_n=n+1$.
Observe that for each $1\leq i\leq n$, the wavelength index $i$ is used on $2n$ edges, in which the only exceptional edge is $\{n+i-1,n+i\}$, i.e., $\mathcal{L}_{n}(i)=\{\{n+i-1,n+i\}\}$.
Then (a), (b) and (c) hold for $\ell=n$.
Consider $\ell=n-1$.
Since $\mathcal{L}_n(i)=\{\{n+i-1,n+i\}\}$ for $1\leq i\leq n$, the wavelength indices $1,2,\ldots,n$ can not be assigned to the paths of length $n-1$.
That is, the idle bands on wavelength less than or equal to $n$ will not be changed after the $(n-1)$-path round.
Hence (d) holds for $\ell=n-1$.

Assume (a), (b), (c) and (d) hold for $\ell=k$, $k\geq \lceil\frac{n}{2}\rceil+2$.
Consider $\ell=k-1$.
By induction hypotheses (c) and (d), the longest idle band on wavelength index from $1$ to $n(n-k+1)$ after the $k$-path round is of length $n-k+1$, which is smaller than $\lfloor\frac{n}{2}\rfloor-1 < k-1$.
So, the paths of length $(k-1)$ can not be assigned the wavelengths with indices less than or equal to $n(n-k+1)$. 
This implies (d): $\mathcal{L}_{k-1}(i) = \mathcal{L}_{k}(i)$, $\forall i=1,2,\ldots, n(n-k+1)$.
In the wavelength assignment, the paths of length $k-1$ are considered in the order: $\langle 0,k-1\rangle$, $\langle 2n-k+2,0\rangle$, $\langle 1,k\rangle$, $\langle 2n-k+3,1\rangle$, $\ldots$, $\langle k-2,2k-3\rangle$, $\langle 2n,k-2\rangle$, and $\langle k-1,2k-2\rangle$, $\langle k,2k-1\rangle$, $\ldots$, $\langle 2n-k+1,2n\rangle$.
By induction hypothesis (a), the wavelength index $n(n-k)+i+1$ is only occupied by the path $\langle i,i+k\rangle$ for $i=n,n+1,\ldots,2n-k$.
Then, by the greedy selection strategy, $\GP(0,k-1)=n(n-k+1)+1$ and $\GP(2n-k+2,0)=n(n-k+1)+n-k+2$, $\GP(1,k)=n(n-k+1)+2$ and $\GP(2n-k+3,1)=n(n-k+1)+n-k+3$, and so on.
As the process goes to $\GP(k-2,2k-3)=n(n-k+1)+k-1$ and $\GP(2n,k-2)=n(n-k+1)+n$, 
each wavelength index between $n(n-k+1)+1$ and $n(n-k+1)+n-k+1$ is occupied by one path of length $k-1$ and one path of length $k$, and each wavelength index between $n(n-k+1)+n-k+2$ and $n(n-k+1)+n$ is occupied by two paths of length $(k-1)$.
Consider the endpoints of these occupied paths.
Let $\mathcal{L}^*(j)$ denote the set of idle bands on wavelength index $j$ so far.
We have
\begin{equation}\label{eq:idle_band}
\begin{split}
&\mathcal{L}^*(n(n-k+1)+i+1) \\ 
&= 
	\begin{cases}
	\{(n+k+i, i), (k+i-1,n+i)\} \\ \hspace*{3cm} \text{ if } 0\leq i\leq n-k, \\
	\{(n+k+i, i), (k+i-1,n+i+1)\} \\ \hspace*{3cm} \text{ if } n-k+1 \leq i \leq n-1.
	\end{cases}
\end{split}
\end{equation}
Notice that the longest idle band in \eqref{eq:idle_band} is of length $n-k+2$, which is less than $k-1$ due to $k\geq\lfloor\frac{n}{2}\rfloor+2$.
Hence the remaining unassigned paths of length $(k-1)$ must receive wavelength index larger than $n(n-k+1)+n$.
This implies that $\mathcal{L}_{k-1}(j)=\mathcal{L}^*(j)$ for $n(n-k+1)+1\leq j\leq n(n-k+1)+n$.
Thus, condition (c) holds.
Moreover, this forces us to have $\GP(k-1,2k-2)=n(n-k+1)+k$, $\GP(k,2k-1)=n(n-k+1)+k+1$, and then up to $\GP(2n-k+1,2n)=n(n-k+1)+2n-k+2$.
Then $T_{k-1}=n(n-k+1)+2n-k+2=(n+1)(n-k+2)$, and thus (a) and (b) also hold.
Hence the result follows by induction.
\end{proof}

\medskip

\begin{lemma}\label{lemma:GP_even}
Let $n>0$ be an even integer.
We have
\begin{enumerate}[(a)]
\item $\GP(i,\frac{n}{2}+i) = \GP(\frac{n}{2}+i,n+i) = \GP(\frac{3n}{2}+1+i,i)$ $= \frac{n^2}{2}+1+i$, for $i=0,1,\ldots, \frac{n}{2}-1$;
\smallskip
\item $\GP(n+i,\frac{3n}{2}+i)=\frac{n^2}{2}-\frac{n}{2}+i$, for $i=0,1,\ldots, \frac{n}{2}$;
\smallskip
\item $\mathcal{L}_{\frac{n}{2}}(\frac{n^2}{2}+1+i) = \emptyset$, for $i=0,1,\ldots, \frac{n}{2}-1$; 
\smallskip
\item $\mathcal{L}_{\frac{n}{2}}(\frac{n^2}{2}-\frac{n}{2}+i) = \{(i,\frac{n}{2}-1+i)\}$, for $i=0,1,\ldots, \frac{n}{2}$;
\smallskip
\item $T_{\frac{n}{2}}={n+1\choose 2}$.
\end{enumerate}
\end{lemma}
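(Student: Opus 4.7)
The plan is to continue the bookkeeping from Lemma~\ref{lemma:GP_largelength} by tracking exactly what the IP algorithm does during round $\ell=n/2$, when it processes the $2n+1$ paths of length $n/2$. Since IP is purely greedy, the behaviour is determined by the idle-band structure, and the proof reduces to identifying, at every call, the smallest wavelength whose idle band contains the path being assigned.

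First I would record the state immediately after round $n/2+1$ by reading off Lemma~\ref{lemma:GP_largelength}. Three families of wavelengths are relevant. Wavelengths $1,\dots,n^2/2-n$ carry only idle bands of length at most $n/2-1$, by the very induction already used inside the proof of Lemma~\ref{lemma:GP_largelength} (applying (d) repeatedly). The wavelengths $n^2/2-n/2+j$ for $j=0,\dots,n/2$ each carry two bands $(j,\,n/2-1+j)$ and $(n+j,\,3n/2+j)$, of lengths $n/2-1$ and $n/2$ respectively, by the second case of (c). Finally the wavelengths $n^2/2+1+k$ for $k=0,\dots,n/2-1$ are the truly new ones introduced in round $n/2+1$; each hosts only the single path $\langle n+k,\,3n/2+1+k\rangle$, and hence has one long idle band $(3n/2+1+k,\,n+k)$ of length $3n/2$.

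I would then split round $n/2$ into three blocks and verify the claimed wavelength assignment iteration by iteration. \emph{Block~I} ($i=0,\dots,n/2-1$): the edges of $\langle i,n/2+i\rangle$ lie in positions $[i,n/2+i]\subset[0,n-1]$, whereas the only bands of length $\ge n/2$ on wavelengths $1,\dots,n^2/2$ are the far-side bands $(n+j,3n/2+j)$, and wavelengths $n^2/2+1,\dots,n^2/2+i$ carry only the residual bands $(n/2+k,n+k)$ ($k<i$) which fail to reach position $i$; so the smallest admissible wavelength is $n^2/2+1+i$, whose band $(3n/2+1+i,n+i)$ of length $3n/2$ contains the required edges. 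After this assignment the band splits into $(3n/2+1+i,i)$ and $(n/2+i,n+i)$, and the second call of the same iteration, for $\langle 3n/2+1+i,i\rangle$, fits the first piece exactly, leaving only $(n/2+i,n+i)$ on wavelength $n^2/2+1+i$. \emph{Block~II} ($i=n/2,\dots,n-1$; write $i=n/2+k$): the path $\langle n/2+k,n+k\rangle$ matches exactly the residual band $(n/2+k,n+k)$, so wavelength $n^2/2+1+k$ is used once more and becomes empty. \emph{Block~III} ($i=n,\dots,3n/2$; write $i=n+k$): the path $\langle n+k,3n/2+k\rangle$ needs a length-$n/2$ band at positions $n+k,\dots,3n/2+k$, and the smallest wavelength providing such a band is $n^2/2-n/2+k$, where the length-$n/2$ band $(n+k,3n/2+k)$ matches exactly, leaving $(k,n/2-1+k)$ behind.

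Assembling, statements~(a) and~(b) are the explicit assignments produced in these three blocks; (c) records the emptying of wavelengths $n^2/2+1,\dots,n^2/2+n/2$ at the end of Block~II; (d) is the residual-band list at the end of Block~III; and (e) is immediate because no wavelength index exceeding $T_{n/2+1}=n^2/2+n/2={n+1\choose 2}$ is ever introduced. The main technical obstacle is the ``no smaller wavelength works'' verification at the head of each block: for every iteration one must rule out every wavelength strictly below the target, using the fact that each such wavelength carries at most two idle bands, that at most one of them has length $\ge n/2$, and that this long band (when it exists) lives on a specific half of the cycle determined by its index. Once this case analysis is done, the rest of the proof is mechanical band-splitting bookkeeping.
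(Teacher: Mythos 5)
Your proposal is correct and follows essentially the same route as the paper: read off the idle-band state after the $(\frac{n}{2}+1)$-path round from Lemma~\ref{lemma:GP_largelength}, observe that each new wavelength $\frac{n^2}{2}+1+j$ hosts only the single path $\langle n+j,\frac{3n}{2}+1+j\rangle$, and then trace the greedy assignment of the length-$\frac{n}{2}$ paths wavelength by wavelength. Your record of the short residual bands as $(j,\frac{n}{2}-1+j)$ is in fact the version consistent with Lemma~\ref{lemma:GP_largelength}(c) and with statement (d) of the present lemma (the paper's displayed formula for $\mathcal{L}_{\frac{n}{2}+1}$ writes $(j,\frac{n}{2}+1+j)$, apparently a typo), so your more explicit ``no smaller wavelength works'' verification is sound and needs no change.
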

\begin{proof}
By Lemma~\ref{lemma:GP_largelength}(c)--(d), the idle bands on wavelength indices less than $\frac{n^2}{2}-\frac{n}{2}$ after the $(\frac{n}{2}-1)$-path round are of length at most $\frac{n}{2}-1$, and each $\mathcal{L}_{\frac{n}{2}-1}(i)$ contains exactly one idle band of length $\frac{n}{2}$ for $\frac{n^2}{2}-\frac{n}{2}\leq i\leq \frac{n^2}{2}$.
More precisely, for $j=0,1,\ldots,\frac{n}{2}$,
\begin{equation}\label{eq:idle_even_1}
\mathcal{L}_{\frac{n}{2}+1}(\frac{n^2}{2}-\frac{n}{2}+j)=\Big\{(j,\frac{n}{2}+1+j),(n+j,\frac{3n}{2}+j)\Big\}.
\end{equation}
Consider the wavelength index $\frac{n^2}{2}+1+j$, where $0\leq j\leq\frac{n}{2}-1$.
By Lemma~\ref{lemma:GP_largelength}(a), only the edges of the path $\langle n+j,\frac{3n}{2}+j+1\rangle$ use the wavelength index $\frac{n^2}{2}+1+j$.
Hence from the greedy selection strategy we have 
\begin{equation}\label{eq:GP_even_1}
\GP(i,\frac{n}{2}+i)=\GP(\frac{3n}{2}+1+i,i)=\frac{n^2}{2}+1+i,
\end{equation}
for $i=0,1,\ldots,\frac{n}{2}-1$.
Now, the wavelength index $\frac{n^2}{2}+1+i$, $0\leq i\leq\frac{n}{2}-1$, is free on edges of the path $\langle\frac{n}{2}+i,n+i\rangle$.
This implies
\begin{equation}\label{eq:GP_even_2}
\GP(\frac{n}{2}+i,n+i)=\frac{n^2}{2}+1+i,
\end{equation}
for $i=0,1,\ldots,\frac{n}{2}-1$.
So we derive (a) by combining \eqref{eq:GP_even_1} and \eqref{eq:GP_even_2}, and then obtain (b) directly from \eqref{eq:idle_even_1}.

Since each wavelength index $\frac{n^2}{2}+1+i$, for $0\leq i\leq \frac{n}{2}-1$, is occupied by three paths of length $\frac{n}{2}$ and one path of length $\frac{n}{2}+1$, there is no idle bands on it after the $\frac{n}{2}$-path round, which implies (c).
As the idle band $(n+j,\frac{3n}{2}+j)$ in \eqref{eq:idle_even_1} is filled with a path of length $\frac{n}{2}$, we get (d).
Finally, (e) is obtained from the assignment of wavelengths to the paths of length $\frac{n}{2}$ in (a) and (b) as well as the fact that $T_{\frac{n}{2}+1}={n+1\choose 2}$ by Lemma~\ref{lemma:GP_largelength}.
Then we complete the proof.
\end{proof}	

See Figure~\ref{fig:11}(c) for an example ($n=5$) of Lemma~\ref{lemma:GP_even}.
A parallel result of Lemma~\ref{lemma:GP_even} for the case odd is $n$ is proposed below, where the proof is omitted due to the similarity between these two cases.

\begin{lemma}\label{lemma:GP_odd}
Let $n>0$ be an odd integer.
We have
\begin{enumerate}[(a)]
\item $\GP(i,i+\frac{n+1}{2})=\frac{n^2-n}{2}+i+1$, for $i=0,1,\ldots, \frac{n-3}{2}$;
\smallskip
\item $\GP(i+\frac{n-1}{2},i+n) = \GP(i+n,i+\frac{3n+1}{2}) = \GP(i+\frac{3n+1}{2},i)=\frac{n^2+1}{2}+i$, for $i=0,1,\ldots, \frac{n-1}{2}$;
\smallskip
\item $\mathcal{L}_{\frac{n+1}{2}}(\frac{n^2-n}{2}+i+1) = \{(i+\frac{n+1}{2},i+n),(i+\frac{3n+1}{2},i)\}$, for $i=0,1,\ldots, \frac{n-3}{2}$; 
\smallskip
\item $\mathcal{L}_{\frac{n+1}{2}}(\frac{n^2+1}{2}+i) = \{(i,i+\frac{n-1}{2})\}$, for $i=0,1,\ldots, \frac{n-1}{2}$;
\smallskip
\item $T_{\frac{n+1}{2}}={n+1\choose 2}$.
\end{enumerate}
\end{lemma}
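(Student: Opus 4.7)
My plan is to follow the template set by the proof of Lemma~\ref{lemma:GP_even}. First, I would pin down the state of the $\GP$ array at the end of the $((n+3)/2)$-path round by specialising Lemma~\ref{lemma:GP_largelength} to $\ell=(n+3)/2$. The key features I need are: $T_{(n+3)/2}=(n^2-1)/2$; every wavelength with index at most $n(n-1)/2$ has idle bands of length at most $(n-1)/2$ (too short to accommodate a length-$(n+1)/2$ path); and each of the top $(n-1)/2$ wavelengths, indexed $n(n-1)/2+1,\ldots,(n^2-1)/2$, carries only the single length-$(n+3)/2$ path $\langle n+j,(3n+3)/2+j\rangle$ (obtained from Lemma~\ref{lemma:GP_largelength}(a) with $i=n+j$, $j=0,\ldots,(n-3)/2$), leaving a single long idle band of length $(3n-1)/2$.

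Next I would trace the IP algorithm through the paths of length $(n+1)/2$. For $i=0,1,\ldots,(n-3)/2$, step $i$ first considers the forward path $\langle i,i+(n+1)/2\rangle$: it fits inside the long idle band on wavelength $(n^2-n)/2+i+1$, and by the preceding paragraph no wavelength with a smaller index has an idle band of length $\geq(n+1)/2$. Greedy selection therefore forces $\GP(i,i+(n+1)/2)=(n^2-n)/2+i+1$, proving~(a); the insertion splits the long idle band into two residual sub-bands, which are precisely the idle bands described by~(c).

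All remaining paths of length $(n+1)/2$ must use wavelengths above $(n^2-1)/2$, because after the insertions of the previous paragraph every wavelength with index at most $(n^2-1)/2$ has idle bands of length at most $(n-1)/2$. The three paths listed in~(b) for a fixed $i$, namely $\langle i+(n-1)/2,i+n\rangle$, $\langle i+n,i+(3n+1)/2\rangle$, and $\langle i+(3n+1)/2,i\rangle$, are three consecutive arcs along the cycle and together cover an arc of total length $(3n+3)/2$, so they can be packed edge-disjointly onto a single new wavelength. The IP order visits them in exactly the sequence required to deposit all three on wavelength $(n^2+1)/2+i$: verifying the greedy choice at each visit establishes~(b); the complementary arc of length $(n-1)/2$ becomes the single idle band described in~(d); and the largest index used equals $(n^2-1)/2+(n+1)/2={n+1\choose 2}$, which is~(e).

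The main technical obstacle, exactly as in Lemma~\ref{lemma:GP_even}, is the bookkeeping behind the paragraph above: at each greedy step one must rule out the possibility of squeezing the incoming path into a partially-filled lower wavelength. These checks are routine modular-arithmetic comparisons between the starting vertex of the incoming path and the starting vertex of each candidate idle band, and the parallel between Lemmas~\ref{lemma:GP_even} and~\ref{lemma:GP_odd} stems from the fact that in both cases the new wavelengths opened in the penultimate round inherit long idle bands of length $(3n-1)/2$, which is precisely what allows three length-$(n+1)/2$ paths to share a single new wavelength in the final round.
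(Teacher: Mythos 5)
The paper gives no proof of this lemma (it is explicitly omitted as ``similar'' to Lemma~\ref{lemma:GP_even}), and your plan is exactly the intended parallel argument: your setup facts are all correct ($T_{\frac{n+3}{2}}=\frac{n^2-1}{2}$; wavelengths up to $\frac{n(n-1)}{2}$ have idle bands of length at most $\frac{n-1}{2}$; wavelength $\frac{n^2-n}{2}+j+1$ carries only $\langle n+j,\frac{3n+3}{2}+j\rangle$ with a single idle band of length $\frac{3n-1}{2}$). However, there is a genuine gap in how you justify part~(b). The IP algorithm interleaves the two families: at step $i$ it assigns the forward path $\langle i,i+\frac{n+1}{2}\rangle$ and then immediately the backward path $\langle i+\frac{3n+1}{2},i\rangle$. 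So when the backward path at step $i<\frac{n-3}{2}$ is processed, the insertions of your ``previous paragraph'' are \emph{not} all done: the wavelengths $\frac{n^2-n}{2}+j+1$ with $j>i$ still carry their untouched idle bands of length $\frac{3n-1}{2}\geq\frac{n+1}{2}$, so your stated reason (``every wavelength with index at most $(n^2-1)/2$ has idle bands of length at most $(n-1)/2$'') is false at that moment. The conclusion survives, but for a different, positional reason that must be checked explicitly: the backward path $\langle i+\frac{3n+1}{2},i\rangle$ starts $2n+i-j$ edges past the head $\frac{3n+3}{2}+j$ of that long band, which leaves fewer than $\frac{n+1}{2}$ edges of the band, so it does not fit. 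A similar explicit check is needed to rule out the length-$\frac{3n+1}{2}$ bands on the already-opened wavelengths $\frac{n^2+1}{2}+j$, $j<i$. These are exactly the ``routine modular comparisons'' you defer, but the blanket justification you give in their place is incorrect as written and must be replaced by the step-by-step interleaved argument.

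One further point: if you actually carry out the computation behind part~(c), the second residual band on wavelength $\frac{n^2-n}{2}+i+1$ is $(i+\frac{3n+3}{2},\,i)$, of length $\frac{n-1}{2}$, not $(i+\frac{3n+1}{2},\,i)$ as printed in the statement (the edge $\{i+\frac{3n+1}{2},i+\frac{3n+3}{2}\}$ is occupied by $\langle n+i,\frac{3n+3}{2}+i\rangle$; also the two bands must total $n-1$ idle edges, not $n$). One can confirm this with $n=5$: wavelength $11$ carries $\langle 5,9\rangle$ and $\langle 0,3\rangle$ after the $3$-path round, so its idle bands are $(3,5)$ and $(9,0)$, not $(8,0)$. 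Your claim that the splitting yields ``precisely the idle bands described by~(c)'' therefore cannot be the output of the computation; the statement itself has an off-by-one that a faithful execution of your plan would expose and should correct. None of this affects parts (a), (b), (d), (e), whose values check out against Table~II.
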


\medskip

We are ready for the main result of this section. 

\begin{theorem}\label{th:cycle_odd_optimal}
For any integer $n\geq 1$, $$\Phi(C_{2n+1})= {n+1\choose 2}.$$
\end{theorem}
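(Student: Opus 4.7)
The plan is to combine the lower bound from Lemma~\ref{lemma:cycle_odd_lower} with a matching upper bound, obtained by proving that the Intelligent Packing algorithm applied to the shortest path system of $C_{2n+1}$ terminates with output ${n+1\choose 2}$. Lemmas~\ref{lemma:GP_largelength}, \ref{lemma:GP_even}, and \ref{lemma:GP_odd} already handle the rounds for path lengths $\ell\geq\lceil n/2\rceil$, establishing $T_{\lceil n/2\rceil}={n+1\choose 2}$ together with an explicit description of the idle bands at that moment. What remains is to show that the subsequent rounds $\ell=\lceil n/2\rceil-1,\ldots,1$ introduce no new wavelength, from which the conclusion will follow by Proposition~\ref{pro:upperbound}.

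I would first record an edge-counting identity: the total length of the shortest path system equals $\sum_{i=1}^{n}(2n+1)i=(2n+1){n+1\choose 2}$, which is exactly the edge capacity of ${n+1\choose 2}$ wavelengths on $C_{2n+1}$. Consequently, if the IP algorithm completes without introducing any new wavelength, then every one of the ${n+1\choose 2}$ wavelengths must be saturated (each edge occupied exactly once); conversely any failure to place a path during the remaining rounds would contradict this identity combined with Lemma~\ref{lemma:cycle_odd_lower}. The substantive step is then to verify the ``no new wavelength'' property round by round, by continuing the inductive bookkeeping from the proofs of Lemmas~\ref{lemma:GP_even} and \ref{lemma:GP_odd}: maintain a parameterized description of $\mathcal{L}_\ell(k)$ for each wavelength index $k$, and check that in round $\ell$ the greedy rule always finds an idle band of length at least $\ell$ among the wavelengths $\{1,2,\ldots,{n+1\choose 2}\}$ that are already in use.

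The main obstacle is the case explosion in the idle-band bookkeeping once $\ell$ becomes small: bands fragment after each round and must be tracked modulo $2n+1$, and the inductive invariants depend on the parity of $n$ and on the residue of $\ell$. I would mitigate this by exploiting rotational symmetry: both the initial idle-band configuration described in Lemmas~\ref{lemma:GP_even} and \ref{lemma:GP_odd} and the greedy placement rule are invariant under a suitable cyclic rotation of $C_{2n+1}$, so it suffices to track a single representative band per rotational orbit of wavelengths, with the remaining placements determined by symmetry. Together with the tight edge-counting identity, this symmetry-driven induction closes the argument and yields the desired upper bound $\Phi(C_{2n+1})\leq{n+1\choose 2}$, matching the lower bound from Lemma~\ref{lemma:cycle_odd_lower}.
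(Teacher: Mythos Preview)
Your overall architecture matches the paper's: the lower bound is Lemma~\ref{lemma:cycle_odd_lower}, and the upper bound comes from showing IP never exceeds $\binom{n+1}{2}$, with Lemmas~\ref{lemma:GP_largelength}--\ref{lemma:GP_odd} handling the rounds $\ell\ge\lceil n/2\rceil$. Where you diverge is in the treatment of the remaining rounds. You anticipate a round-by-round inductive bookkeeping of the idle bands for $\ell<\lceil n/2\rceil$, flagging a case explosion and proposing rotational symmetry as a cure. The paper bypasses this entirely with a single \emph{snapshot} claim: at the moment $\ell=\lceil n/2\rceil$, for every pair $(h,j)$ with $1\le h\le\lceil n/2\rceil-1$ and $0\le j\le 2n$ there is \emph{exactly one} wavelength $k\le\binom{n+1}{2}$ whose idle-band set contains the band $(j,j+h)$. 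This follows directly by reading off Lemma~\ref{lemma:GP_largelength}(c),(d): those idle bands are frozen from round $\ell+1$ onward, and one simply enumerates which of them have length $h$ by plugging $\ell=n-h$ and $\ell=n-h+1$ into part~(c). Given this snapshot, the greedy rule in each subsequent round places every length-$\ell$ path into its unique exact-fit band (the only band long enough to receive it), so no further induction on $\ell$ is needed.

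One caution about your symmetry shortcut: the IP algorithm is \emph{not} rotationally invariant, since the processing order of paths within a round is tied to node labels (it starts at $i=0$), and Lemma~\ref{lemma:GP_largelength}(c) shows the idle-band pattern is only quasi-periodic across wavelengths (there is an exceptional shift at the boundary $i=n-\ell$). So ``track one representative per rotational orbit'' would require care to justify; it is not a free symmetry. The paper's snapshot enumeration avoids this issue and is shorter than the round-by-round route you outline.
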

\begin{proof}
By Lemma~\ref{lemma:cycle_odd_lower} and Proposition~\ref{pro:upperbound}, it suffices to show that the Algorithm~\ref{alg:GGP} returns $total={n+1\choose 2}$.
After assigning wavelengths to the paths of length $\ell=n,n-1,\ldots,\lceil\frac{n}{2}\rceil$, by Lemma~\ref{lemma:GP_even}(e) and Lemma~\ref{lemma:GP_odd}(e), the maximal used wavelength index is ${n+1\choose 2}$, i.e., $T_{\lceil\frac{n}{2}\rceil}={n+1\choose 2}$.
Due to the greedy selection strategy, we only need to prove that for any pair of integers $h$ and $j$ with $1\leq h\leq\lceil\frac{n}{2}\rceil-1$ and $0\leq j\leq 2n$, there exists exactly one wavelength index $k$ such that $(j,j+h)\in\mathcal{L}_{\lceil\frac{n}{2}\rceil}(k)$ and $k\leq {n+1\choose 2}$.
As the case $n$ is odd can be dealt with in the same way, we only consider the even case.

First, the assignment of wavelengths to the paths of length $\frac{n}{2}$ shown in Lemma~\ref{lemma:GP_even}(a)--(b) states that the associated wavelength indices are less than $\frac{n^2}{2}-\frac{n}{2}$.
Then $\mathcal{L}_{\frac{n}{2}}(k)=\mathcal{L}_{\frac{n}{2}+1}(k)$ for $1\leq k<\frac{n^2}{2}-\frac{n}{2}$.
Next, by Lemma~\ref{lemma:GP_largelength}(c), the $\ell$-path round will produce $2n$ idle bands: $2n-\ell$ of them are of length $n-\ell$ and the others are of length $n-\ell+1$.
By the recursive relation in Lemma~\ref{lemma:GP_largelength}(d), in order to find all idle bands of length $h$, a fixed integer between $1$ and $\frac{n}{2}-1$, it is sufficient to consider the $(n-h)$-path and $(n-h+1)$-path rounds.
Plugging $\ell=n-h$ into Lemma~\ref{lemma:GP_largelength}(c) leads to the following:
\begin{align*}
&\mathcal{L}_{n-h}(nh+i+1) \\
&= 
	\begin{cases}
	\{(i-h, i), (n-h+i,n+i)\} \quad \text{ if } 0\leq i\leq h-1, \\
	\{(i-h, i), (n-h+i,n+i+1)\} ~\text{ if } h \leq i \leq n-1.
	\end{cases}
\end{align*}
Therefore, the $2n-(n-h)=n+h$ idle bands of length $h$ herein are $$(j,j+h),  \text{ for } -h \leq j\leq n-1.$$
Similarly, by plugging $\ell=n-h+1$ into Lemma~\ref{lemma:GP_largelength}(c), the $n-(n-h+1)=h+1$ idle bands of length $h$ produced from the $(n-h+1)$-path round are $$(j,j+h), \text{ for } n\leq j\leq 2n-h.$$
We find that for any pair of integers $h$ and $j$ with $1\leq h\leq\frac{n}{2}-1$ and $0\leq j\leq 2n$, there is exactly one idle band $(j,j+h)$ on some wavelength $k\leq {n+1\choose 2}$.
Then the result follows by applying iteratively the greedy selection strategy.
\end{proof}

One can conclude from the proof of Lemma~\ref{lemma:cycle_odd_lower} that, if there exists a path system $\mathcal{P}$ such that $\Phi(C_{2n+1},\mathcal{P})={n+1\choose 2}$, then $\mathcal{P}$ must consist of shortest paths.
Combining this fact with Theorem~\ref{th:cycle_odd_optimal} yields the following.

\begin{corollary}\label{coro:odd}
If $\mathcal{P}$ is an ideal path system of $C_{2n+1}$, then $\mathcal{P}$ is perfect.
\end{corollary}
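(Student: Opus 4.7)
The plan is to argue by contradiction, combining Theorem~\ref{th:cycle_odd_optimal} with the total-length lower bound of Proposition~\ref{pro:lowerbound}. Suppose $\mathcal{P}$ is an ideal path system of $C_{2n+1}$, so that by Theorem~\ref{th:cycle_odd_optimal}, $\Phi(C_{2n+1},\mathcal{P})=\binom{n+1}{2}$. I will suppose that $\mathcal{P}$ contains at least one non-shortest path and derive a contradiction with this equality.

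The key quantitative observation is the following. In $C_{2n+1}$ the shortest path between any two distinct nodes is unique and has length at most $n$; its non-shortest alternative has length $2n+1-k$, where $k\leq n$ is the length of the shortest path. Thus replacing one shortest path by its non-shortest counterpart strictly increases $\sum_{P\in\mathcal{P}}\|P\|$ by $2n+1-2k\geq 1$. Consequently, if $\mathcal{P}$ is not a shortest path system, then
\[
\sum_{P\in\mathcal{P}}\|P\| \;\geq\; (2n+1)\binom{n+1}{2}+1,
\]
since the shortest path system, by the count in the proof of Lemma~\ref{lemma:cycle_odd_lower}, has total length exactly $(2n+1)\binom{n+1}{2}$.

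Applying Proposition~\ref{pro:lowerbound} to $\mathcal{P}$ now yields
\[
\Phi(C_{2n+1},\mathcal{P}) \;\geq\; \left\lceil\frac{(2n+1)\binom{n+1}{2}+1}{2n+1}\right\rceil \;=\; \binom{n+1}{2}+1,
\]
contradicting $\Phi(C_{2n+1},\mathcal{P})=\binom{n+1}{2}$. Hence every path in $\mathcal{P}$ must be a shortest path, so $\mathcal{P}$ is perfect.

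I do not anticipate a hard step here: the whole argument is essentially a one-line consequence of the length-monotonicity used inside Lemma~\ref{lemma:cycle_odd_lower} plus Theorem~\ref{th:cycle_odd_optimal}. The only point deserving care is the strict-inequality estimate $2n+1-2k\geq 1$ for $k\leq n$, which ensures that even a single non-shortest substitution pushes the ceiling in Proposition~\ref{pro:lowerbound} up by one full unit.
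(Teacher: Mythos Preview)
Your argument is correct and follows essentially the same route as the paper: both observe that any non-shortest path system of $C_{2n+1}$ has strictly larger total path length than the unique shortest one, so Proposition~\ref{pro:lowerbound} forces $\Phi(C_{2n+1},\mathcal{P})\geq\binom{n+1}{2}+1$, contradicting Theorem~\ref{th:cycle_odd_optimal}. The paper simply points back to the proof of Lemma~\ref{lemma:cycle_odd_lower} for this implication, whereas you have spelled out the strict-increase step $2n+1-2k\geq 1$ explicitly.
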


\section{Length First Packing Algorithm}\label{sec:simulation}
The optimal global packing for odd cycles is completely characterized by means of Algorithm~\ref{alg:GGP}, IP, in Section~\ref{sec:oddcycle}.
In order to handle general graphs, we extend IP to Algorithm~\ref{alg:LFP}, \emph{Length First Packing (LFP)}.
The idea of LFP is to greedily assign the smallest free wavelength index to paths of $\mathcal{P}$ in the descending order of the path length.
In contrast to IP, there is no additional ordering relation, instead a random selection
is adopted among paths with the same length in the LFP scheme.
Clearly, IP can be viewed as a special case of LFP when the objective, $\mathcal{P}$, is the (unique) shortest path system of an odd cycle.

In order to demonstrate the power of LFP, we apply it to two classes of networks
in this section: chain networks and ring networks with random and quasi-random traffic
loads.

\begin{algorithm}
  \SetAlgoLined
  \SetKwInOut{Input}{input}\SetKwInOut{Output}{output}
  \Input{a path system, $\mathcal{P}$}
  \Output{the maximal visited wavelength index, \emph{total}}
  initialization: $total=0$, $\d\ell=\max_{P\in\mathcal{P}}||P||$ and $U=\mathcal{P}$ \;
  \While{$\ell>0$}
  {
    \For{$P\in U$}
 	{
 		\If{$||P||=\ell$}
 		{
 		$\omega(P) \gets$ the least available wavelength index\;
 		$U\gets U\setminus\{P\}$\;
 		}
 	}
 	update \emph{total} \;
    $\ell\gets \ell-1$\;
  }
    \Return \emph{total}\;
  \caption{Length First Packing (LFP)}
  \label{alg:LFP}
\end{algorithm}

\subsection{Chain networks}
A chain network (or a path) is another fundamental topology in the study of optical networks \cite{Agbinya_06}.
Let $D_n$ denote a chain of $n$ nodes.
The LFP algorithm can be used to provide an assignment which yields
the global packing number of $D_n$, $\Phi(D_n)$.

\begin{theorem}\label{thm:path}
For any integer $n\geq 1$,
$$\Phi(D_n)=\Big\lfloor\frac{n}{2}\Big\rfloor \Big\lceil\frac{n}{2}\Big\rceil.
$$
\end{theorem}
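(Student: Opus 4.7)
The plan is to sandwich $\Phi(D_n)$ between matching bounds of $\lfloor n/2\rfloor \lceil n/2\rceil$. Since $D_n$ is a tree, the path system $\mathcal{P}$ is forced: for every pair $\{a,b\}$ with $a<b$, the path $P_{\{a,b\}}$ must be $\langle a,b\rangle$. Consequently $\Phi(D_n)=\Phi(D_n,\mathcal{P})=\chi(\mathcal{H})$ for this unique $\mathcal{P}$, and both bounds concern the same object.

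For the lower bound I would use a clique (edge load) argument. Fix an edge $e_i=\{i,i+1\}$ with $0\le i\le n-2$. The paths in $\mathcal{P}$ containing $e_i$ are exactly those $P_{\{a,b\}}$ with $a\le i$ and $b\ge i+1$, so their count is $(i+1)(n-1-i)$. They pairwise share $e_i$, forming a clique in the conflict graph $\mathcal{H}$; hence they require pairwise distinct wavelengths. Maximizing the quadratic $(i+1)(n-1-i)$ over $i\in\{0,1,\ldots,n-2\}$ (equivalently $x(n-x)$ over $x\in\{1,\ldots,n-1\}$) yields $\lfloor n/2\rfloor\lceil n/2\rceil$, attained at the middle edge(s). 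Therefore $\Phi(D_n)\ge\lfloor n/2\rfloor\lceil n/2\rceil$.

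For the matching upper bound, my preferred route is to identify each path $\langle a,b\rangle$ with the real interval $[a,b]$ and observe that two such paths share an edge of $D_n$ iff their intervals overlap in a set of positive length, i.e.\ iff the open intervals $(a,b)$ intersect. Thus $\mathcal{H}$ is an interval graph, so it is perfect, giving $\chi(\mathcal{H})=\omega(\mathcal{H})$, which by the lower-bound computation equals $\lfloor n/2\rfloor\lceil n/2\rceil$. Since this section is devoted to showcasing the LFP algorithm, an alternative (and arguably more fitting) upper-bound proof is to show that Algorithm~\ref{alg:LFP} run on $\mathcal{P}$ uses at most $\lfloor n/2\rfloor\lceil n/2\rceil$ wavelengths. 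One proceeds by induction on the current length $\ell$, maintaining the invariant that after the $\ell$-path round the maximum used wavelength equals the maximum number of paths of length $\ge\ell$ covering any single edge, analogous to the idle-band bookkeeping in Lemmas~\ref{lemma:GP_largelength}--\ref{lemma:GP_odd}.

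The main obstacle is not the idea but the exposition: the interval-graph argument is essentially one line but invokes the classical perfectness theorem, whereas the LFP-based argument requires explicit tracking of which wavelengths remain free on each edge as paths are processed from longest to shortest. The arithmetic is however gentler than in the cyclic case because on $D_n$ the greedy process always respects the natural partial order induced by containment of intervals, so no non-trivial re-use analysis is needed.
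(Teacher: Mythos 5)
Your proposal is correct, and your lower bound coincides with the paper's (the paper simply observes that the middle edge lies on $\lfloor n/2\rfloor\lceil n/2\rceil$ paths, whereas you compute the load $(i+1)(n-1-i)$ of every edge and maximize; same content). The interesting divergence is in the upper bound. The paper's proof runs the LFP algorithm on the unique path system and tracks the maximal used index $T_\ell$ after each length-$\ell$ round, proving by induction the explicit formula $T_\ell=\sum_{i=\ell}^{n-1}(n-i)$ for $\ell\geq\lceil n/2\rceil$ and $T_\ell={\lceil n/2\rceil+1\choose 2}+\sum_{i=\ell}^{\lceil n/2\rceil-1}i$ below that; this is exactly your second route, and your proposed invariant (the maximal used wavelength after the $\ell$-round equals the maximal number of paths of length $\geq\ell$ through a single edge, necessarily the middle one) is precisely what those formulas express. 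Note that the paper, like you, dispatches the crucial step --- that every length-$\ell$ path avoiding the middle edge can reuse a wavelength $\leq T_{\ell+1}$ --- with an ``it is not hard to see,'' so neither account is more rigorous there. Your first route, identifying $P_{\{a,b\}}$ with the open interval $(a,b)$ so that the conflict graph $\mathcal{H}$ is an interval graph and hence perfect, is genuinely different and is the cleaner argument: perfectness gives $\chi(\mathcal{H})=\omega(\mathcal{H})$ at once, and the Helly property for intervals shows that every clique of $\mathcal{H}$ consists of paths through a common edge, so $\omega(\mathcal{H})$ equals the maximal edge load. What the interval-graph route buys is brevity and independence from any particular processing order; what the paper's route buys is a demonstration that the LFP heuristic itself attains $\Phi(D_n)$, which is the stated purpose of that section (and is reinforced by the subsequent remark that a greedy assignment in the wrong order fails on $D_6$). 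Either version would be acceptable as a proof of the theorem.
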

\begin{proof}
Let $\mathcal{P}$ be the unique path system of $D_n$, as there is only one path that connects any two fixed nodes.
Since the middle-most edge of $D_n$ occurs on $\lfloor\frac{n}{2}\rfloor \lceil\frac{n}{2}\rceil$ paths in $\mathcal{P}$, we have $\Phi(D_n)\geq\lfloor\frac{n}{2}\rfloor \lceil\frac{n}{2}\rceil$.

For $\ell=1,2,\ldots,n-1$, let $\mathcal{P}_{\ell}\subset\mathcal{P}$ denote the collection of paths of length $\ell$.
Obviously, $|\mathcal{P}_{\ell}|=n-\ell$.
The LFP algorithm assigns wavelengths to the path in $\mathcal{P}_{n-1}$, the two paths in $\mathcal{P}_{n-2}$, the three paths in $\mathcal{P}_{n-3}$, and so on, in a greedy fashion.
Notice that there is no addition constraint on the order of those paths which have the same length.
Denote by $T_{\ell}$ the maximal visited wavelength index after the LFP algorithm finishes the wavelength assignment of the paths in $\mathcal{P}_{\ell}$.
We claim that 
\begin{equation}\label{eq:path}
T_{\ell} = \begin{cases}
\sum_{i=\ell}^{n-1} (n-i) & \text{ if } \ell\geq\lceil\frac{n}{2}\rceil, \vspace*{7pt} \\ 
{\lceil\frac{n}{2}\rceil+1 \choose 2} + \sum_{i=\ell}^{\lceil\frac{n}{2}\rceil-1}i & \text{ if } \ell < \lceil\frac{n}{2}\rceil.
\end{cases}
\end{equation}

We only consider the case $n$ is even, as the odd case can be dealt with in the same way.
Let $n=2m$, and denote by $e$ the middle-most edge on $D_n$.
Since all paths of length $\ell\geq m$ contain $e$, they must receive distinct wavelength indices.
Then $T_\ell=|\mathcal{P}_{2m-1}|+|\mathcal{P}_{2m-2}|+\cdots+|\mathcal{P}_{\ell}|=1+2+\cdots+(2m-\ell)$.
Now, assume \eqref{eq:path} holds for $\ell+1$, where $\ell<m$, and consider the paths of length $\ell$.
In $\mathcal{P}_{\ell}$ there are exactly $\ell$ paths which contain $e$.
For each of those paths in $\mathcal{P}_{\ell}$ which do not contain $e$, it is not hard to see that there always exists at least one free wavelength index $k$ with $k\leq T_{\ell+1}$.
Therefore, $T_{\ell}=T_{\ell+1}+\ell$, and \eqref{eq:path} follows by induction.
Hence $\Phi(D_{2m})={m+1\choose 2}+{m\choose 2}=m^2$, as desired.
\end{proof}

\noindent \emph{Remark.} 
Given a chain, one can easily find a greedy wavelength assignment on its path system which does not produce its global packing number.
Take $D_6$ as an example.
Let $\{0,1,2,3,4,5\}$ denote the node set, where $a$ and $b$ are adjacent if $|a-b|=1$.
The following global packing, $\omega$, needs $10$ wavelengths, while $\Phi(D_6)=9$ due to Theorem~\ref{thm:path}.
$$
\footnotesize
\begin{array}{cccccc}
 & \omega(P_{\{1,2\}})=1 & \rightarrow & \omega(P_{\{2,3\}})=1 & \rightarrow & \omega(P_{\{4,5\}})=1 \\
\rightarrow & \omega(P_{\{0,4\}})=2 & \rightarrow & \omega(P_{\{0,5\}})=3 & \rightarrow & \omega(P_{\{0,2\}})=4 \\
\rightarrow & \omega(P_{\{3,4\}})=1 & \rightarrow & \omega(P_{\{1,5\}})=5 & \rightarrow & \omega(P_{\{0,1\}})=1 \\
\rightarrow & \omega(P_{\{3,5\}})=4 & \rightarrow & \omega(P_{\{1,4\}})=6 & \rightarrow & \omega(P_{\{0,3\}})=7 \\
\rightarrow & \omega(P_{\{1,3\}})=8 & \rightarrow & \omega(P_{\{2,4\}})=9 & \rightarrow & \omega(P_{\{2,5\}})=10 
\end{array}
\normalsize
$$
This example states that the order of paths is the essential issue, if one tries to apply a greedy algorithm to derive the global packing numbers.

\subsection{Ring networks with more random traffic load}
We originally introduced the global packing number as the minimum number of wavelengths
required to support simultaneous communication over a WDM network when the traffic
load is uniformly defined as one request per node pair.  Obviously, this traffic condition can be relaxed so that the global packing number corresponds to the minimal
number of wavelengths required for a general given traffic load for the network.
The LFP algorithm can be applied to these traffic scenarios.

First, we carry out a performance study of the LFP scheme on ring networks in comparison
with the \emph{Random Packing(RP)} scheme.  The latter scheme loops through
the wavelength set once starting with wavelength 1.  For each wavelength, $i$, there
is an iterative loop which randomly selects an unassigned path and assigns to $i$ if it does not cause any violation until no more paths can be assigned to $i$.

We considered ring networks of node sizes $n=5, 10, 15, 20, 25, 30, 35, 40$ and three kinds of traffic models defined as below:
\begin{itemize}
\item \emph{uniform}: each node pair has exactly one connection;
\item \emph{full-random}: there are $n^2$ random connections;
\item \emph{quasi-random}: based on the uniform model, there are $n$ extra random connections.
\end{itemize}
In all three models, for any node pair a shortest path connecting the nodes is chosen.
This choice is unique for odd cycles.  In the case that $n$ is even and the nodes are antipodal,
the shortest path is chosen randomly among the two candidates. 
For full-random and quasi-random traffic types, $100$ traffic models are instantiated for each
type; and for each instantiated traffic model, a simulation run consists of $10,000$ tests 
and the averaged required wavelength number is reported.

\begin{table}[h]\label{tab:uniform}
\begin{tabular}{|c||c|c|c|c|c|c|}
\hline
$n$ & $\Phi(C_n)$ & LFP & RP \\ \hline \hline 
5 & 3 & 3 & 3.47 \\ \hline
10  & 13 & 13.47 & 14.92 \\ \hline
15 & 28 & 29.69 &  33.05 \\ \hline
20 & 51 & 53.11 & 58.42 \\ \hline
25 & 78 & 82.27 & 90.28 \\ \hline
30 & 113 & 118.08 & 129.29 \\ \hline
35 & 153 & 160.31 & 174.77 \\ \hline
40 & 201 & 209.02 & 227.20 \\ \hline
\end{tabular}
\caption{Global packing performance comparison among the optimal solution, LFP and RP schemes in uniform traffic model.}
\end{table}

Table~III reports the global packing performance for the uniform traffic model.
The columns indicate node sizes, exact values of $\Phi(C_n)$ which are derived in Section~\ref{sec:evencycle} and Section~\ref{sec:oddcycle}, and the average number of required wavelengths by applying the LFP and RP schemes.
Compare to $\Phi(C_n)$, the simulation results indicate that the LFP scheme requires
$3\%$ to $6\%$ additional wavelengths, and is better than the RP scheme, which requires
an additional $13\%$ to $16\%$.

\begin{table}[h]\label{tab:random}
\begin{tabular}{|c||c|c||c|c|}
\hline
\multirow{2}{*}{$n$} &
\multicolumn{2}{c||}{full-random} &
\multicolumn{2}{c|}{quasi-random} \\
\cline{2-5}
  & LFP & RP & LFP & RP \\
\hline \hline
5 & 10.43 & 10.48 & 5.81 & 5.90 \\
\hline
10 & 34.49 & 35.06 & 17.85 & 18.61 \\
\hline
15 & 71.14 & 72.67 & 35.43 & 38.01 \\
\hline
20 & 120.94 & 124.12 & 60.55 & 64.83 \\
\hline
25 & 183.55 & 188.51 & 90.77 & 97.91 \\
\hline
30 & 258.91 & 266.59 & 128.23 & 138.18 \\
\hline
35 & 347.23 & 357.52 & 171.62 & 184.96 \\
\hline
40 & 448.05 & 462.06 & 223.26 & 240.13 \\
\hline
\end{tabular}
\caption{Global packing performance comparison between LFP and RP schemes in full- and quasi-random traffic models.}
\end{table}

Table~IV lists the average number of wavelengths required by applying LFP and RP schemes to the full-random and quasi-random traffic models.
In either case, the LFP scheme is obviously better than the RP scheme; the former uses less than $98\%$ and $93\%$ of the wavelengths that are required by the latter when $n\geq 15$ in the full-random and quasi-random scenarios, respectively.

\medskip

\section{Concluding remarks}\label{sec:conclusion}
The global packing number is a newly defined index on a graph, motivated by
the wavelength assignment problem in a WDM optical network.
This paper focuses on a ring networks, which have underlying graphs defined by cycles.
Based on graph packing method and a newly proposed greedy construction, IP algorithm, we completely characterize the global packing number of a cycle as well as its ideal path system.
Furthermore, the LFP algorithm, which can be viewed as a general version of IP algorithm, has been applied to the chain networks and networks with random traffic loads.
Simulation results show that the LFP scheme is more efficient than the random scheme in wavelength usage.  Results obtained here point to several new directions for future research,
for example, extending the concept of global packing number to networks that allow
wavelength conversion. 


\end{document}